\numberwithin{equation}{section}
\newtheorem{theorem}{Theorem}[section]
\newtheorem{lemma}[theorem]{Lemma}
\newtheorem{remark}[theorem]{Remark}
\newtheorem{proposition}[theorem]{Proposition}
 \newcommand{\pf}{\text{pf}}
 \newcommand{\Pf}{\text{Pf}}
 \newcommand{\cF}{{\mathcal F}}
 \newcommand{\bC}{{\mathbb C}}
 \newcommand{\rz}{|0\rangle}
\newcommand{\lz}{\langle0|}
\newcommand{\cM}{\mathcal{M}}
\begin{document}
\title[BKP hierarchy and Pfaffian point process]{BKP hierarchy and Pfaffian point process}

\thanks{$^\ddag$Department of Mathematics, China University of Mining and Technology, Beijing 100083, China}

\thanks{$^\dag$ Department of Mathematics and Statistics, University of Melbourne, Victoria 3010, Australia}

\thanks{$^\ast$Corresponding author: Shi-Hao Li(shihao.li@unimelb.edu.au)}

\author[Zhi-Lan Wang]{Zhi-Lan Wang $^{\ddag}$}
\author[Shi-Hao Li]{Shi-Hao Li $^{\dag,\ast}$}

\thanks{Email addresses: zlwang@cumtb.edu.cn; shihao.li@unimelb.edu.au}

\subjclass[2010]{37K10, 15A15, 17B65, 60C05, 05E05}
\date{}

\dedicatory{}

\keywords{neutral fermions, BKP hierarchy, Pfaffian point process}

	\begin{abstract}
Inspired by Okounkov's work [\emph{Selecta Mathematica}, 7(1):57--81, 2001] which relates KP hierarchy to determinant point process, we establish a relationship between BKP hierarchy and Pfaffian point process. We prove that the correlation function of the shifted Schur measures on strict partitions can be expressed as a Pfaffian of skew symmetric  matrix kernel, whose elememts are certain vacuum expectations of neutral fermions. We further show that the  matrix integrals solution of BKP hierarchy can also induce a certain Pfaffian point process.
	\end{abstract}
	
	\maketitle

\section{Introduction}
There is a connection revealed by Okounkov \cite{Okounkov2001Infinite} between random partitions and KP hierarchy of type $A_\infty$. He introduced the Schur measure of a partition $\lambda$ which is proportional to $s_{\lambda}(x)s_{\lambda}(y)$, where $s_{\lambda}$ is the Schur function. Then the correlation function can be realized as a determinant point process via the Fock space formalism, which satisfies the KP (or 2d-Toda lattice) hierarchy. In Kyoto school's picture, there are different ways to describe the tau function of KP hierarchy of type $A_\infty$ \cite{alexandrove13,Date1981Vertex,JM}. On one hand, the tau function can be viewed as an element in the Bosonic Fock space (the space of symmetric functions), and thus it can be expressed in terms of Schur funcions. On the other hand, the tau funcion can also be viewed as a vector in the Fermionic Fock space (the infinite wedge space). The Boson-Fermion correspondence is an explicit isomorphism between these two spaces, via which a relationship between Schur functions and vacuum expectations of fermions is obtained, and this is the reason why the Boson-Fermion correspondence of type $A_\infty$ plays an important role in Okounkov's approach. The infinite wedge space also appears in \cite{borodin05,okounkov2003correlation} on Schur process and in \cite{vanmoerbeke2011Random}  on a Toeplitz-type determinanat point process.

Moreover, besides KP hierarchy of type $A_\infty$, there are KP hierarchies of other types. One of them is of type $B_\infty$, which we call BKP hierarchy for short. BKP hierarchy was introduced and studied in details by the Kyoto school \cite{date82,JM}. Many phenomenons in BKP hierarchy are parallel to the cases of type $A_\infty$. For example, In \cite{Okounkov00} Okounkov found that the generating functions for the Hurwitz numbers of certain ramified coverings of $\bC\mathbb{P}^1$ are the 2d-Toda tau functions, and parallelly Natanzon and Orlov \cite{natanzon15} showed that the generating functions for the weighted Hurwitz numbers of certain branched coveings of $\mathbb{R}\mathbb{P}^2$ are BKP tau functions. Another analogy between type $A_\infty$ and $B_\infty$ is also important: in the case of type $B_\infty$, the corresponding tau function can be described in the language of neutral fermions. Therefore, many results derived by using free fermions in the $A_\infty$ case can be generalized to the $B_\infty$ case, if one instead uses neutral fermions. For instance, In \cite{foda07} neutral fermions are used to obtain volume-weighted plane partitions, which is analogous to the fact that free fermions can be used to obtain plane partitions \cite{okounkov2003correlation}. Moreover, as the BKP tau functions can be described respectively in the Bosonic picture and the Fermionic picture \cite{you89}, the Boson-Fermion correspondence of type $B_\infty$ allows us to relate projective Schur functions to vacuum expectations of neutral fermions.

Inspired by these facts, we generalize Okounkov's results to BKP hierarchy in this paper. We consider the shifted Schur measure $\cM$ on strict paritions \cite{tracy2004limit}, whose weight of $\lambda$ is propotional to $P_\lambda(x)Q_\lambda(y)$, where $P_\lambda (x)$ and $Q_\lambda (y)$ are projective Schur functions. For a  finite set $A \in\mathbb{Z}_+$, the correlation function is defined as the probability that the set ${\sigma}(\lambda)=\{\lambda_i\}$ containing $A$. In Theorem \ref{thm:Paf}, we prove that
\begin{align}\label{ppp}
\rho(A)=\Pf(K(a_i,a_j))_{a_{ i},a_{ j}\in\pm A},
\end{align}
and thus we can relate the BKP hierarchy to a Pfaffian point process.

The same result has been obtained by Matsumoto \cite{matsumoto2005correlation}, who calculated the correlation function using operators on the exterior algebra. Later, Vuleti{\'c} \cite{vuletic2007shifted} generalized Matsumoto's result to the shifted Schur process and related it to a Pfaffian point process. Both these two articles are generalizations of Okounkov's results \cite{Okounkov2001Infinite,okounkov2003correlation}, and they mainly focused on the measures of strict partitions. However, in our article, we start with a different point of view. We use the neutral fermions to do calculations and pay more attention to its connection with the theory of integrable hierarchy. Moreover, since the tau function of BKP hierarchy plays a key role in Pfaffian point process, we turn to a specific $\tau$-function------the matrix integrals solution to BKP hierarchy and find its connection with some certain Pfaffian point process. Interestingly, an explicit Pfaffian point process with skew symmetric matrix kernel is induced by the matrix integrals solution of BKP hierarchy, which generalizes the Pfaffian point process in \cite{rains00} and unify a certain Pfaffian point process with arbitrary points rather than the points of even number.

The rest of this paper is arranged as follows. In Section \ref{sec:Abr}, we review some basic facts on neutral fermions and the Fock space for BKP hierarchy. In particular, we introduce the Boson-Fermion correspondence of type $B_\infty$. In Section \ref{sec:Neu}, projective Schur functions are introduced in terms of neutral fermions and we use them to construct a measure on strict partitions, thus proving that the correlation functions can be realized as a Pfaffian point process. In Section \ref{sec:Mat}, a certain Pfaffian point process is constructed from the matrix integrals solution of BKP hierarchy.

\section{A brief introduction to neutral fermions}\label{sec:Abr}
In \cite{JM}, neutral fermions $\{\phi_n, n\in \mathbb{Z}\}$ were introduced to construct the spin representation of $B_\infty$ which derive the BKP hierarchy finally. In this section, we give a brief introduction to the method proposed by Jimbo and Miwa.

Introducing the neutral free fermions $\{\phi_n,n\in\mathbb{Z}\}$ by the relations
\begin{align*}
\phi_n=\frac{\psi_n+(-1)^n\psi^*_{-n}}{\sqrt{2}},
\end{align*}
where $\{\psi_i,\psi_i^*, \,i\in\mathbb{Z}\}$ are standard charged free fermions satisfying the anti-involution relations $[\psi_i,\psi^*_j]_+=\delta_{i,j}$ and $[\psi_i,\psi_j]_+=[\psi^*_i,\psi^*_j]_+=0$, it turns out
\begin{align*}
[\phi_m,\phi_n]_+=(-1)^m\delta_{m,-n}.
\end{align*}

Let $W_B$ be the complex space spanned by $\{\phi_n,n\in\mathbb{Z}\}$ and denote $Cl(W_B)$ as the Clifford algebra generated by $W_B$, then the right Fock space for neutral fermions can be defined by
$$\cF_B=Cl(W_B)/Cl(W_B)(\sum\limits_{n<0}\bC\phi_n),$$
and we denote $|0\rangle$ as the residue class of 1 in $\cF_B$.
Similarly the left Fock space can be defined by
$$\cF_B^*=Cl(W_B)/(\sum\limits_{n>0}\bC\phi_n)Cl(W_B),$$
and $\langle0|$ is the residue class of 1 in $\cF_B^*$.

Clearly, from the above definition, we have 
\begin{equation*}
\phi_m|0\rangle=0,\;m<0 \quad\text{and}\quad
\langle0|\phi_m=0,\;m>0,
\end{equation*}
and thus \begin{align*}
\cF_B=\text{span}\{\phi_{n_1}\cdots\phi_{n_k}|0\rangle\},\quad \cF_B^*=\text{span}\{\langle0|\phi_{-n_k}\cdots\phi_{-n_1}\},\quad \text{with $n_1>\cdots>n_k\geq0$}.
\end{align*}

There is a nondegenerate bilinear pairing $\cF_B^*\times\cF_B\rightarrow \bC$, and we write the pairing of $\langle U|$ and $|V\rangle$ as $\langle U|V\rangle$. The vacuum expectation value of an operator $A$ is defined as $\langle 0|A|0\rangle$, and is denoted as $\langle A\rangle$.
We have
\begin{equation*}
\langle\phi_m\phi_n\rangle=\left\{\begin{array}{lcl}
(-1)^m\delta_{m,-n},&&n>0,\\ 
\frac{1}{2}\delta_{m,0},&&n=0,\\
0,&&n<0.
\end{array}
\right.
\end{equation*}
By using Wick's theorem, one has
\begin{equation}\label{eqn:VacExp}
\langle\phi(z_1)\cdots\phi(z_{2s})\rangle=\displaystyle\frac{1}{2^s}\prod_{j<j'}\displaystyle\frac{1-z_{j'}/z_j}{1+z_{j'}/z_j}.
\end{equation}

Moreover, a Hamiltonian in terms of neutral fermions can be defined as
\begin{align}\label{ham}
H_n=\frac{1}{2}\sum_{i\in\mathbb{Z}}(-1)^{i+1}\phi_i\phi_{-i-n}.
\end{align}
\begin{remark}
Here we would like to mention that $H_n$ has an original definition as 
\begin{align*}
H_n=\frac{1}{2}\sum_{i\in\mathbb{Z}}(-1)^{i+1}:\phi_i\phi_{-i-n}: \quad\text{if}\quad n\not=0
\end{align*}
where $:\cdot:$ is the normal order of fermions. This is because
\begin{align*}
H_n=\frac{1}{2}\sum_{i\in\mathbb{Z}}:\phi_i\phi_{-i-n}:=\frac{1}{2}\sum_{i\in\mathbb{Z}}(\phi_i\phi_{-i-n}-\langle\phi_i\phi_{-i-n}\rangle).
\end{align*}
We have known that $\langle\phi_i\phi_{-i-n}\rangle=0$ if $n\not=0$, therefore it follows (\ref{ham}).
\end{remark}
\begin{remark}
In the neutral fermions case, $H_n$ equals zero if $n$ is a nonzero even number. The reason lies in the fact that for an arbitrary integer $i=n$, there exists $i'=-n-m$, such that $(-1)^{i+1}\phi_i\phi_{-i-n}+(-1)^{i'+1}\phi_{i'}\phi_{-i'-n}=(-1)^{m+1}[\phi_m,\phi_{-n-m}]_+=0$ if $n\in2\mathbb{Z}\backslash \{0\}$, which means $H_n=0$ if $n$ is a nonzero even integer. In the followings, we assume $n$ is an odd number without extra statement.
\end{remark}

In the next, some properties of the Hamiltonian $H_n$ are demonstrated. Firstly, it is shown that $\{H_n,n\in 2\mathbb{Z}+1\}$ form a Heisenberg algebra.

\begin{proposition}
\begin{align}\label{comm}
[H_n,H_m]=\frac{n}{2}\delta_{n+m,0}.
\end{align}
\end{proposition}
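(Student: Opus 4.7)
The plan is to reduce the commutator of the quadratic operators to the simple action $[H_n,\phi_l]$ via the derivation property, with a separate centrality argument for the diagonal case $n+m=0$ where a central anomaly appears. The starting point is to compute $[H_n,\phi_l]$ directly: expanding
\begin{equation*}
[\phi_i\phi_{-i-n},\phi_l] = \phi_i\{\phi_{-i-n},\phi_l\} - \{\phi_i,\phi_l\}\phi_{-i-n}
\end{equation*}
inside the defining sum for $H_n$, the two delta functions pick out $i=l-n$ and $i=-l$; simplifying the signs using that $n$ is odd (so $(-1)^{n+1}=1$) shows each half contributes $\tfrac12\phi_{l-n}$, giving $[H_n,\phi_l]=\phi_{l-n}$. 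Thus $H_n$ acts as a shift on fermion modes.

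Since $H_n$ is bosonic, $[H_n,\cdot]$ is a derivation, so applying it to the bilinear expression for $H_m$ yields
\begin{equation*}
[H_n,H_m] = \tfrac12\sum_i(-1)^{i+1}\bigl(\phi_{i-n}\phi_{-i-m}+\phi_i\phi_{-i-(n+m)}\bigr).
\end{equation*}
The second summand is $H_{n+m}$ by definition; reindexing the first by $k=i-n$ (and using $(-1)^{k+n+1}=(-1)^k$ since $n$ is odd) converts it into $-H_{n+m}$. For $n+m\neq 0$, the integer $n+m$ is a nonzero even number, so $H_{n+m}=0$ by the preceding Remark, and both sides of \eqref{comm} vanish.

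For $n+m=0$ the cancellation above formally reads $-H_0+H_0$, which is meaningless since $H_0$ is divergent without normal ordering. Instead I argue that $[H_n,H_{-n}]$ is central: Jacobi together with the shift property gives
\begin{equation*}
\bigl[[H_n,H_{-n}],\phi_l\bigr] = [H_n,\phi_{l+n}] - [H_{-n},\phi_{l-n}] = \phi_l-\phi_l=0,
\end{equation*}
so $[H_n,H_{-n}]$ commutes with every generator and is therefore a scalar. Taking $n>0$ without loss of generality, I verify that $H_n\rz=0$ (each summand $\phi_i\phi_{-i-n}\rz$ vanishes: either $\phi_{-i-n}\rz=0$ when $-i-n<0$, or else $i\leq -n<0$ combined with $\{\phi_i,\phi_{-i-n}\}=0$ gives $\phi_i\phi_{-i-n}\rz=-\phi_{-i-n}\phi_i\rz=0$). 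Hence the scalar equals $\lz H_nH_{-n}\rz$, which by Wick's theorem \eqref{eqn:VacExp} reduces to a sum of two-point contractions. The diagonal contractions $\langle\phi_i\phi_{-i-n}\rangle$ vanish for $n\neq 0$, so only the two cross-contractions survive; summing them over the bulk indices $-n<i<0$ together with the boundary cases at $i=0$ and $i=-n$ (which use $\langle\phi_0\phi_0\rangle=\tfrac12$ and $\langle\phi_{-n}\phi_n\rangle=(-1)^n$) produces exactly $n/2$.

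The main obstacle is the $n+m=0$ case: the naive cancellation collapses there, and extracting the central anomaly $n/2$ requires combining the Jacobi centrality argument with careful Wick bookkeeping of the finitely many boundary contributions involving $\phi_0^2=\tfrac12$, with the factor $n$ itself arising from counting integer lattice points in the interval $[-n,0]$.
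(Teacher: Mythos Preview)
Your proof is correct, and the reduction step lands on exactly the same two-term expression the paper obtains,
\[
[H_n,H_m]=\tfrac{1}{2}\sum_{i}(-1)^{i+1}\bigl(\phi_{i-n}\phi_{-i-m}+\phi_i\phi_{-i-(n+m)}\bigr),
\]
but you reach it by first proving the shift action $[H_n,\phi_l]=\phi_{l-n}$ and then applying the derivation property, whereas the paper expands the bilinear--bilinear commutator $[\phi_i\phi_j,\phi_k\phi_l]$ directly into four terms and collapses the double sum. (In the paper, the shift identity you use appears only later, in the proof of Proposition~2.6, so you have reordered the logic; there is no circularity.)

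The substantive difference is in the $n+m=0$ case. The paper simply asserts that the expression equals $n/2$ when $n+m=0$; you instead recognise that the naive cancellation $-H_0+H_0$ is ill-defined and supply a genuine argument: Jacobi shows $[H_n,H_{-n}]$ commutes with every $\phi_l$, hence acts as a scalar on Fock space, and the scalar is then extracted as $\lz H_nH_{-n}\rz$ via Wick contractions over the finite window $-n\leq i\leq 0$. This is more transparent about where the central charge actually comes from, and it makes the role of the boundary terms involving $\langle\phi_0\phi_0\rangle=\tfrac12$ explicit. The paper's direct route is shorter but leaves the anomaly computation to the reader; your route costs an extra lemma but isolates the mechanism cleanly. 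One small point you leave implicit: concluding ``central $\Rightarrow$ scalar'' uses that $C\rz$ is annihilated by all $\phi_l$ with $l<0$ and that $C$ preserves the $\mathcal{F}_B^0/\mathcal{F}_B^1$ parity, which together force $C\rz\in\bC\rz$; this is routine but worth a sentence.
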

\begin{proof}
Since
\begin{align*}
[\phi_i\phi_j,\phi_k\phi_l]=(-1)^j\delta_{j,-k}\phi_i\phi_l-(-1)^i\delta_{i,-k}\phi_j\phi_l+(-1)^j\delta_{j,-l}\phi_k\phi_i-(-1)^i\delta_{i,-l}\phi_k\phi_j,
\end{align*}
then it follows
\begin{align*}
[H_n,H_m]&=\frac{1}{4}\sum_{i,j\in\mathbb{Z}}(-1)^{i+j}[\phi_i\phi_{-i-n},\phi_j\phi_{-j-m}]\\
&=\frac{1}{4}\sum_{j\in\mathbb{Z}}[(-1)^{j-n}\phi_{j-n}\phi_{-j-m}-(-1)^j\phi_{j-n}\phi_{-j-m}+(-1)^{j-n}\phi_j\phi_{-j-m-n}-(-1)^j\phi_j\phi_{-j-m-n}],
\end{align*}
noticing that $H_n\not=0$ only in the cases of $n\in2\mathbb{Z}+1$, therefore,
\begin{align*}
[H_n,H_m]=\frac{1}{2}\sum_{j\in\mathbb{Z}}(-1)^j[-\phi_{j-n}\phi_{-j-m}-\phi_j\phi_{-j-m-n}]
\end{align*}
which equals $0$ if $n+m\not=0$ and $\frac{n}{2}$ if $n+m=0$.
\end{proof}
Moreover, if we introduce the current operators
\begin{align*}
H_+(t)=\sum_{k\geq1,odd}t_kH_k,\quad H_-(t)=\sum_{k\geq1,odd}t_{-k}H_{-k},
\end{align*}
then from a direct computation, one could obtain
\begin{align}\label{z}
[H_+(t),H_{-}(t)]=\sum_{k\geq1,odd}\frac{k}{2}t_kt_{-k}.
\end{align}
In what follows, we introduce some properties of the current operator $H_+(t)$ (respectively $H_-(t)$) which are helpful for us to derive integrable systems and Pfaffian point processes.
\begin{proposition}
For current operators $H_+(t)$ and $H_-(t)$, they satisfy
\begin{align*}
H_+(t)|0\rangle=0,\quad\langle 0|H_-(t)=0.
\end{align*}
\end{proposition}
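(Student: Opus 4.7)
The plan is to reduce the statement to showing $H_k|0\rangle=0$ for each positive odd integer $k$ (and dually $\langle 0|H_{-k}=0$), since $H_+(t)$ is a finite linear combination of such $H_k$ on any given vector. Then it suffices to check that every individual bilinear term $\phi_i\phi_{-i-k}$ appearing in
\[
H_k=\tfrac{1}{2}\sum_{i\in\mathbb{Z}}(-1)^{i+1}\phi_i\phi_{-i-k}
\]
annihilates the vacuum.

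The argument splits into two cases depending on the sign of $i$. First, if $i\geq 0$, then $-i-k\leq -k<0$, so $\phi_{-i-k}|0\rangle=0$ directly from the defining property $\phi_m|0\rangle=0$ for $m<0$, and the term vanishes. Second, if $i<0$, we use the anticommutation relation $[\phi_i,\phi_{-i-k}]_+=(-1)^i\delta_{i,i+k}$; since $k\neq 0$, the Kronecker delta is zero, so $\phi_i\phi_{-i-k}=-\phi_{-i-k}\phi_i$, and then $\phi_i|0\rangle=0$ kills the term. Summing over $i$ gives $H_k|0\rangle=0$.

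The dual identity $\langle 0|H_{-k}=0$ follows by exactly the same case analysis applied from the left, using $\langle 0|\phi_m=0$ for $m>0$ together with the same trivial vanishing of the anticommutator delta. Finally, linearity in the time variables $t_{\pm k}$ yields $H_+(t)|0\rangle=0$ and $\langle 0|H_-(t)=0$.

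I do not anticipate a real obstacle: the only subtlety is noticing that the potentially dangerous case (where neither $\phi_i$ nor $\phi_{-i-k}$ obviously annihilates the vacuum) never occurs, because requiring $i\geq 0$ and $-i-k\geq 0$ simultaneously forces $k\leq 0$. This consistency check with the hypothesis $k>0$ is what makes the one-line anticommutation argument go through cleanly.
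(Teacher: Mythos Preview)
Your proof is correct and follows essentially the same approach as the paper: reduce to showing $H_k|0\rangle=0$ for each positive odd $k$, then handle each term $\phi_i\phi_{-i-k}$ by a two-case split, using $\phi_{-i-k}|0\rangle=0$ when the right index is negative and anticommuting $\phi_i$ past $\phi_{-i-k}$ (the anticommutator vanishing since $k\neq 0$) when $i<0$. Your write-up is in fact slightly more explicit than the paper's, which leaves the anticommutation step implicit and phrases the case coverage a bit loosely.
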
 
\begin{proof}
Since $H_n=\sum_{i\in\mathbb{Z}}(-1)^{i+1}\phi_i\phi_{-i-n}$ $(n>0)$, then if $i<0$, it is obvious that $\phi_i\phi_{-i-n}|0\rangle=0$. At the same time, when $i>0,-i-n<0$, $\phi_i\phi_{-i-n}|0\rangle=0$ could also be verified. Noticing that $i>-n$ and $i<0$ could run over the whole integer axis, therefore $H_n|0\rangle=0$ if $n>0$. Then one can conclude $H_+(t)|0\rangle=0$, and by the same manner, $\langle0|H_-(t)=0$ is established.
\end{proof}
\begin{remark}
It directly follows that $e^{H_+(t)}|0\rangle=|0\rangle$ and $\langle0|e^{H_-(t)}=\langle0|$.
\end{remark}

\begin{proposition}
Denote $\phi(z)=\sum_{i\in\mathbb{Z}}\phi_iz^i$, then it follows
\begin{align}\label{hp1}
[H_+(t),\phi(z)]=(\sum_{n\geq1,odd}t_nz^n)\phi(z),
\end{align}
and consequently, it turns out
\begin{align}\label{hp2}
e^{H_+(t)}\phi(z)e^{-H_+(t)}=e^{\xi(t,z)}\phi(z),
\end{align}
where $\xi(t,z)=\sum_{n\geq1,odd}t_nz^n$.
\end{proposition}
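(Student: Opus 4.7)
My plan is to prove the commutator identity (\ref{hp1}) first at the level of individual modes $[H_n, \phi_k]$, then extend to the generating series, and finally deduce (\ref{hp2}) from the Hadamard (adjoint exponential) lemma.

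First I would compute $[H_n, \phi_k]$ for odd $n \geq 1$. Starting from $H_n = \frac{1}{2}\sum_{i\in\mathbb{Z}}(-1)^{i+1}\phi_i\phi_{-i-n}$, I would apply the derivation identity $[AB, C] = A[B, C]_+ - [A, C]_+ B$ together with the anticommutator $[\phi_m, \phi_n]_+ = (-1)^m \delta_{m, -n}$. This produces two delta contributions: one forcing $i = k - n$ and yielding $\phi_{k-n}$, the other forcing $i = -k$ and also yielding $\phi_{k-n}$. After carrying the sign $(-1)^{i+1}$ through, the odd-ness of $n$ ensures both terms combine with the same sign, so the sum over $i$ collapses to $2\phi_{k-n}$ and the prefactor $\tfrac{1}{2}$ gives the clean identity $[H_n, \phi_k] = \phi_{k-n}$.

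Next I would package this into the generating function. Multiplying by $z^k$ and summing,
\begin{align*}
[H_n, \phi(z)] = \sum_{k \in \mathbb{Z}} [H_n, \phi_k]\, z^k = \sum_{k \in \mathbb{Z}} \phi_{k-n}\, z^k = z^n \phi(z),
\end{align*}
and then by linearity in $t$,
\begin{align*}
[H_+(t), \phi(z)] = \sum_{n\geq 1,\, odd} t_n\, [H_n, \phi(z)] = \Bigl(\sum_{n\geq 1,\, odd} t_n z^n\Bigr)\phi(z) = \xi(t,z)\,\phi(z),
\end{align*}
which is exactly (\ref{hp1}).

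Finally, to derive (\ref{hp2}), I would invoke the Hadamard lemma $e^A B e^{-A} = \sum_{k\geq 0} \tfrac{1}{k!}\mathrm{ad}(A)^k(B)$. Since $\xi(t,z)$ is a scalar in the fermionic algebra, iterating (\ref{hp1}) gives $\mathrm{ad}(H_+(t))^k \phi(z) = \xi(t,z)^k \phi(z)$, so
\begin{align*}
e^{H_+(t)} \phi(z) e^{-H_+(t)} = \sum_{k\geq 0} \frac{\xi(t,z)^k}{k!}\, \phi(z) = e^{\xi(t,z)} \phi(z).
\end{align*}

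The only real obstacle is the bookkeeping in the first step: one must carefully track the signs $(-1)^{i+1}$ and $(-1)^m$ coming from the anticommutators, and it is essential that $n$ be odd, which is what makes the two delta contributions add constructively rather than cancel. Everything after that is a mechanical application of linearity and the standard adjoint-exponential identity.
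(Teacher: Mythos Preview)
Your proof is correct and follows essentially the same route as the paper: compute $[H_n,\phi_k]=\phi_{k-n}$ via the identity $[AB,C]=A[B,C]_+-[A,C]_+B$ and the anticommutation relations (using that $n$ is odd), sum against $z^k$ to obtain $[H_n,\phi(z)]=z^n\phi(z)$ and hence (\ref{hp1}), and then apply the Hadamard/adjoint-exponential lemma to deduce (\ref{hp2}). The paper cites this last step as Lemma~3.35 of \cite{Hal}, which is exactly the $e^ABe^{-A}=\sum_{k\ge 0}\tfrac{1}{k!}\mathrm{ad}(A)^k(B)$ identity you invoke.
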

\begin{proof}
Before proving this result, firstly we should verify 
\begin{align*}
[H_n,\phi(z)]=z^n\phi(z).
\end{align*}
It comes true because
\begin{align*}
[H_n,\phi_k]&=\frac{1}{2}\sum_{i\in\mathbb{Z}}(-1)^{i+1}(\phi_i[\phi_{-i-n},\phi_k]_+-[\phi_i,\phi_k]_+\phi_{-n-i})\\&=\frac{1}{2}((-1)^{-n+1}\phi_{k-n}-(-1)\phi_{k-n})=\phi_{k-n},
\end{align*}
and 
\begin{align*}
[H_n,\phi(z)]=\sum_{k\in\mathbb{Z}}[H_n,\phi_k]z^k=\sum_{k\in\mathbb{Z}}\phi_{k-n}z^{k-n}z^n=z^n\phi(z).
\end{align*}
Therefore, equation (\ref{hp1}) can be obtained directly from the above computation and equation \eqref{hp2} can be verified by Lemma 3.35 in \cite{Hal}.
\end{proof}

\begin{remark}
It is noted that for the dual current operator
\begin{align*}
H_-(t)=\sum_{l\in 2\mathbb{Z}+1}t_{-l}H_{-l}=\frac{1}{2}\sum_{n\in\mathbb{Z},l\in2\mathbb{Z}+1}(-1)^{n+1}t_{-l}\phi_n\phi_{-n+l},
\end{align*}
it follows
\begin{align}
[H_-(t),\phi(z)]=(\sum_{l\in2\mathbb{Z}+1}t_{-l}z^{-l})\phi(z),\quad e^{H_-(t)}\phi(z)e^{-H_-(t)}=e^{\xi(t_-,z^{-1})}\phi(z),
\end{align}
which is of importance when the dual is taken into consideration in the derivations of integrable systems and Pfaffian point process.
\end{remark}

Now, we are about to demonstrate the Boson-Fermion correspondence of type $B_\infty$. Firstly, we consider the following spaces of even and odd elements respectively:
$$\mathcal{F}_B^0=\text{span}\{\phi_{n_1}\cdots\phi_{n_{2k}}|0\rangle\},\quad \mathcal{F}_B^1=\text{span}\{\phi_{n_1}\cdots\phi_{n_{2k+1}}|0\rangle\}$$ 
Then we have the decomposition $\mathcal{F}_B=\mathcal{F}_B^0\oplus \mathcal{F}_B^1$. 
Moreover, each $\mathcal{F}_B^i$ is isomorphic to $\bC[x_1,x_3,\cdots]$ and this isomorphism is called the Boson-Fermion correspondence of type $B_\infty$ \cite{you89}:

\begin{align*}
\sigma_B:\mathcal{F}_B\cong\bC[w,x_1,x_3,\cdots]/\sim\\
|U\rangle \rightarrow \sum\limits_{i=0}^1 w^i \langle i|e^{H_+(t)}|U\rangle,
\end{align*}
where $\sim$ is a relation $w^2\sim1$, i.e. we regard $w^2$ as $1$, and $|1\rangle=2^{\frac{1}{2}}\phi_0|0\rangle$. 

Note that $\sigma_B$ induces isomorphisms on each components. For $i=0,\;1$, one has
\begin{align*}
\sigma_B^i:\mathcal{F}_B^i\cong\bC[x_1,x_3,\cdots]\\
|U\rangle \rightarrow \langle i|e^{H_+(t)}|U\rangle.
\end{align*}
Under the isomorohism $\sigma_B$, for $n\geq 1$ odd, the Heisenberg action of $H_n$ can be realized on $\bC[x_1,x_3,\cdots]$ as $\displaystyle\frac{\partial}{\partial x_n}$, and $H_{-n}$ as $\displaystyle\frac{n}{2}x_n$.

Conversely, we can realize the fermionic action of $\phi_i$ on $\bC[w,x_1,x_3,\cdots]/<w^2-1>$. Let us introduce the vertex operator $X_B(z)=e^{\xi(t,z)}e^{-\xi(\tilde{\partial}_+,z^{-1})}$ and its dual $\bar{X}_B(z)=e^{-\xi(t_-,z^{-1})}e^{\xi(\tilde{\partial}_{-},z)}$ with notation
$
\tilde{\partial}_{\pm}=(2{\partial_{t_{\pm1}}},\frac{2}{3}\partial_{t_{\pm3}},\cdots),
$
then it follows
\begin{proposition}\label{prop:bosonfermion}
\begin{equation*}
\sigma_B(\phi(z)|U\rangle)=2^{-\frac{1}{2}}wX_B(z)\sigma_B(|U\rangle).
\end{equation*}
\end{proposition}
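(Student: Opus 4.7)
The plan is to reduce the claim to a pair of matching operator equalities on the parity components $\cF_B^i$ of the Fock space, and then verify these via the cyclic action of the Heisenberg subalgebra. Using \eqref{hp2}, I first pull $\phi(z)$ past $e^{H_+(t)}$:
\begin{align*}
\sigma_B(\phi(z)|U\rangle) = e^{\xi(t,z)} \sum_{i=0}^{1} w^i \langle i | \phi(z) e^{H_+(t)} |U\rangle.
\end{align*}
Since $\tilde\partial_+ = (2\partial_{t_1}, \tfrac{2}{3}\partial_{t_3}, \ldots)$, the operator $e^{-\xi(\tilde\partial_+, z^{-1})}$ is the translation $t_n \mapsto t_n - \tfrac{2}{n}z^{-n}$ on functions of $(t_1, t_3, \ldots)$; as $\{H_n : n \geq 1 \text{ odd}\}$ mutually commute by \eqref{comm}, this gives
\begin{align*}
e^{-\xi(\tilde\partial_+, z^{-1})} \langle i | e^{H_+(t)} |U\rangle = \langle i | e^{H_+(t)} e^{-\sum_{n\geq 1 \text{ odd}} (2/n) z^{-n} H_n} |U\rangle.
\end{align*}
Matching the coefficients of $w^0$ and $w^1$ (using $w^2\sim 1$ and the fact that $\phi(z)$ reverses parity while $e^{H_+(t)}$ preserves it) reduces the claim to the pair of identities
\begin{align*}
\langle 1-i | \phi(z) e^{H_+(t)} = 2^{-1/2} \langle i | e^{H_+(t)} e^{-\sum_{n\geq 1 \text{ odd}} (2/n) z^{-n} H_n}
\end{align*}
viewed as linear functionals on $\cF_B^i$, for $i=0,1$.

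Denote by $A(z)$ and $B(z)$ the two sides of the $i=0$ identity. Using $e^{H_+(t)}|0\rangle = |0\rangle$, $H_n|0\rangle = 0$ for $n > 0$, and $\langle\phi_0\phi_n\rangle = \tfrac{1}{2}\delta_{n,0}$, one verifies $A(z)|0\rangle = B(z)|0\rangle = 2^{-1/2}$. I then check that $A$ and $B$ transform identically under the Heisenberg creation operator $H_{-n}$ ($n\geq 1$ odd). For $A$, combine $[H_{-n},\phi(z)] = z^{-n}\phi(z)$ (from \eqref{hp1} specialized to the dual current) with $e^{H_+(t)}H_{-n} = (H_{-n} + \tfrac{n t_n}{2})e^{H_+(t)}$ (a direct consequence of \eqref{comm}) and $\langle 1|H_{-n} = \sqrt{2}\langle 0|\phi_0 H_{-n} = \sqrt{2}\langle 0|(H_{-n}\phi_0 - \phi_n) = 0$. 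For $B$, commuting $H_{-n}$ through the inner exponential $e^{-\sum(2/m)z^{-m}H_m}$ yields a scalar shift $-z^{-n}$ (by \eqref{comm}), then commuting through $e^{H_+(t)}$ yields $\tfrac{n t_n}{2}$, and $\langle 0|H_{-n} = 0$ completes the computation. Both calculations give
\begin{align*}
A(z)(H_{-n}|U\rangle) = \bigl(\tfrac{n t_n}{2} - z^{-n}\bigr) A(z)|U\rangle,\quad B(z)(H_{-n}|U\rangle) = \bigl(\tfrac{n t_n}{2} - z^{-n}\bigr) B(z)|U\rangle.
\end{align*}

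Since $\{H_{-n}\}_{n\geq 1 \text{ odd}}$ generates $\cF_B^0$ cyclically from $|0\rangle$ (under $\sigma_B^0$ it corresponds to multiplication by $\tfrac{n}{2}x_n$ on $\bC[x_1, x_3, \ldots]$), the matching vacuum values together with the identical intertwiners force $A(z) \equiv B(z)$ on all of $\cF_B^0$. The $i=1$ identity follows by the same argument, with $\phi_0|0\rangle$ serving as the cyclic vector of $\cF_B^1$. The main obstacle is the careful bookkeeping of the Heisenberg and fermionic commutators and the consistent tracking of parity through $w^2\sim 1$; once the two intertwining relations are matched, the result follows from the uniqueness of the irreducible Heisenberg representation on each bosonic component.
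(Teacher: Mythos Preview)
Your argument is correct, and the reduction you carry out is equivalent to the paper's Lemma~A.1 (set $t=0$ in your pair of identities, or conversely right-multiply Lemma~A.1 by $e^{H_+(t)}$ and use that the positive $H_n$'s commute).  The difference lies in how the reduced identity is verified.

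The paper checks it by direct evaluation on the fermionic spanning set $\phi(z_1)\cdots\phi(z_{2s-1})|0\rangle$ of $\cF_B^0$, using the Wick formula~\eqref{eqn:VacExp}: both sides are computed to equal the explicit product $\tfrac{1}{2^s}\prod_{i}\frac{1-z_i/z}{1+z_i/z}\prod_{j<j'}\frac{1-z_{j'}/z_j}{1+z_{j'}/z_j}$.  Your route is representation-theoretic: you match the vacuum value and the intertwiner with the Heisenberg creation operators $H_{-n}$, then invoke cyclicity of $|0\rangle$ under $\{H_{-n}\}$ (equivalently, the isomorphism $\sigma_B^0$ already recorded in the paper).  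Your method avoids any product computation and makes transparent why the two sides must agree --- they are eigenfunctionals for the same Heisenberg action with the same vacuum normalisation --- but it does rely on the isomorphism $\sigma_B^0\colon\cF_B^0\cong\bC[x_1,x_3,\dots]$ as a black box.  The paper's computation is more self-contained in that respect: it works directly on the fermionic side and in principle does not presuppose the bosonic isomorphism, at the cost of the explicit Wick manipulation.
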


This proposition is stated in an equivalent form in \cite{date82,vanleur15,kac97}:
\begin{align*}
X_B(z)\langle 0|e^{H_+(t)}=2^{\frac{1}{2}}\langle 1| e^{H_+(t)}\phi(z),\quad X_B(z)\langle 1| e^{H_+(t)}=2^{\frac{1}{2}}\langle0|e^{H_+(t)}\phi(z).
\end{align*}
Since the proof is ignored in these articles, for readers' convenience, we give a brief proof in Appendix \ref{app:BKP}, which are also helpful in finding integrable hierarchies.

\section{Schur Q-functions and Pfaffian point process}\label{sec:Neu}

In this section, a Pfaffian point process is given in terms of neutral fermions. The basic facts of Okounkov's work are reviewed firstly. For a positive integer $n$, a partition $\lambda$ of $n$ is a set of positive integers $\lambda_1\geq\lambda_2\geq\cdots\geq\lambda_l>0$ with $n=\lambda_1+\cdots+\lambda_l$, and it is denoted by $\lambda=(\lambda_1,\cdots,\lambda_l)$. The number $|\lambda|=n$ is called the weight of $\lambda$, and $l(\lambda)=l$ is called the length of $\lambda$. We can also define the partition of 0, which is denoted by $\lambda=0$. In \cite{Okounkov2001Infinite}, Okounkov defined the Schur measure on partitions. For a partition $\lambda$, the measure of $\lambda$ is proportional to $s_{\lambda}(x)s_{\lambda}(y)$, where $s_{\lambda}$ is the Schur function. Under the Boson-Fermion correspondence, $s_{\lambda}$ can be written as the image of an element in the Fermionic Fock space, and then the correlation function is expressed as the vacuum expectation of certain fermions, which can be realized as a determinant point process by using Wick's theorem. It is shown that the determinants are tau functions of 2d-Toda hierarchy of integrable nonlinear partial differential equations of Ueno and Takasaki \cite{ueno84}.  A good review for this result can be found in \cite{borodin09}.

What we are going to do in this section is to apply Okounkov's discussion to strict partitions. A partition  $\lambda=(\lambda_1,\cdots,\lambda_l)$ is called a strict partition if $\lambda_1,\cdots,\lambda_l$ are distinct. The set of all strict partitions is denoted by DP. In \cite{tracy2004limit} the shifted Schur measure was defined on DP as follows:
\begin{align*}
\mathcal{M}(\lambda)=\frac{1}{Z}P_\lambda(x)Q_\lambda(y), \quad Z=\sum_{\lambda\in\text{DP}}P_\lambda(x)Q_\lambda(y)=\prod_{i,j}\frac{1+x_iy_j}{1-x_iy_j},
\end{align*}
where $P_\lambda(x)$ and $Q_\lambda(y)$ are Schur P-function and Schur Q-function respectively, which are defined later. For a finite subset $A\in\mathbb{Z}_+$, we can define the correlation function
\begin{align*}
\rho(A)=\mathcal{M}(\{\lambda, A\subset \sigma(\lambda)\}),
\end{align*}
where ${\sigma}(\lambda)=\{\lambda_i\}$. To compute the correlation function, we need to express $P_\lambda(x)$ and $Q_\lambda(y)$ as images of elements in the Fock space of neutral fermions under the Boson-Fermion correspondence of type $B_\infty$. Now let us give a definition of  $P_\lambda(x)$ and $Q_\lambda(y)$ in terms of the neutral fermions. 

Consider the expansion (cf. \cite{nimmo05,macdonald79})
\begin{align*}
e^{2\xi(t,z)}=\sum_{k\geq0}q_k(t)z^k,
\end{align*}
then from the equation \eqref{hp2}, one can conclude
\begin{align*}
e^{H_+(t)}\phi_ie^{-H_+(t)}=\sum_{k\geq0}q_k(\frac{1}{2}t)\phi_{i-k}.
\end{align*}
Then it is not difficult to compute that 
\begin{align}\label{eq:phiphi}
\begin{aligned}
\langle 0|e^{H_+(t)}\phi_i\phi_j|0\rangle&=\langle 0|e^{H_+(t)}\phi_ie^{-H_+(t)}e^{H_+(t)}\phi_je^{-H_+(t)}|0\rangle\\&=\frac{1}{2}q_i(\frac{1}{2}t)q_j(\frac{1}{2}t)+\sum_{k=1}^j(-1)^kq_{k+i}(\frac{1}{2}t)q_{j-k}(\frac{1}{2}t).
\end{aligned}
\end{align}
On the other hand, it is easy to compute the orthogonality condition
\begin{align*}
1=e^{2\xi(t,z)}e^{-2\xi(t,z)}=\sum_{i,j\geq0}q_i(t)q_{j}(-t)z^{i+j}.
\end{align*}
For $i+j=n>0$, one can obtain
\begin{align*}
\sum_{i=0}^n(-1)^iq_i(t)q_{n-i}(t)=0.
\end{align*}
This equation is trivial for $n$ odd and if $n=2m$, one can obtain
\begin{align*}
q_{m}(t)^2+2\sum_{k=1}^m(-1)^kq_{m+k}(t)q_{m-k}(t)=0,
\end{align*}
which helps to define
\begin{align}\label{eq:qab}
q_{a,b}(t)=q_a(t)q_b(t)+2\sum_{k=1}^b(-1)^kq_{a+k}(t)q_{b-k}(t)
\end{align}
with the property
\begin{align*}
q_{a,b}(t)=-q_{b,a}(t),\quad\quad q_{a,a}(t)=0.
\end{align*}
Moreover, from equations \eqref{eq:phiphi} and \eqref{eq:qab}, one knows
\begin{align*}
2\langle0| e^{H_+(t)}\phi_a\phi_b|0\rangle=q_{a,b}(\frac{1}{2}t).
\end{align*}
Therefore, for a strict partition with even length $\lambda=\{(\lambda_1,\cdots,\lambda_{2n})|\lambda_1>\cdots>\lambda_{2n}>0\}$, we can define a related Schur Q-function \cite{macdonald79,you89}
\begin{align*}
Q_\lambda(\frac{1}{2}t)=\Pf(q_{\lambda_i,\lambda_j}(\frac{1}{2}t))_{1\leq i,j\leq 2n}=\Pf(2\langle 0|e^{H_+(t)}\phi_i\phi_j|0\rangle)_{1\leq i,j\leq 2n}=2^{\frac{l(\lambda)}{2}}\langle0|e^{H_+(t)}\phi_{\lambda_1}\cdots\phi_{\lambda_{2n}}|0\rangle
\end{align*}
For a strict partition with odd length, the related Schur Q-function can be defined as
\begin{align*}
Q_\lambda(\frac{1}{2}t)&=\Pf\left(
\begin{array}{ccccc}
0&q_{\lambda_1,\lambda_2}(\frac{1}{2}t)&\cdots&q_{\lambda_{1},\lambda_{2n-1}}(\frac{1}{2}t)&q_{\lambda_1}(\frac{1}{2}t)\\
q_{\lambda_2,\lambda_1}(\frac{1}{2}t)&0&\cdots&q_{\lambda_2,\lambda_{2n-1}}(\frac{1}{2}t)&q_{\lambda_2}(\frac{1}{2}t)\\
\vdots&\vdots&&\vdots&\vdots\\
q_{\lambda_{2n-1},\lambda_1}(\frac{1}{2}t)&q_{\lambda_{2n-1},\lambda_2}(\frac{1}{2}t)&\cdots&0&q_{\lambda_{2n-1}}(\frac{1}{2}t)\\
-q_{\lambda_1}(\frac{1}{2}t)&-q_{\lambda_2}(\frac{1}{2}t)&\cdots&-q_{\lambda_{2n-1}}(\frac{1}{2}t)&0
\end{array}
\right)\\
&=\Pf(2\langle0|e^{H_+(t)}\phi_{\lambda_i}\phi_{\lambda j}|0\rangle)_{i,j=1,\cdots,2n-1,0}=2^{\frac{l(\lambda)+1}{2}}\langle 0|e^{H_+(t)}\phi_{\lambda_1}\cdots\phi_{\lambda_{2n-1}}\phi_0|0\rangle.
\end{align*}
With the Miwa transformation
\begin{align*}
t_n=\frac{2}{n}\sum_i x_i^n, \quad \text{$n$ odd},
\end{align*}
the following theorem is established \cite{you89}.
\begin{theorem}\label{schurq}
For a distinct partition $\lambda=(\lambda_1,\cdots,\lambda_l)\in \text{DP}$, we have
$$Q_\lambda(x)=2^{\frac{l(\lambda)}{2}}\sigma_B^{0}(\phi_{\lambda_1}\cdots\phi_{\lambda_l}|\alpha(\lambda)\rangle)=2^{\frac{l(\lambda)}{2}}\langle 0|e^{H_+(t)}\phi_{\lambda_1}\cdots\phi_{\lambda_l}|\alpha(\lambda)\rangle,$$
where
\begin{align*}
\alpha(\lambda)=\left\{
\begin{aligned}
&0\qquad\text{$l(\lambda)$ is even},\\
&1\qquad\text{$l(\lambda)$ is odd}.
\end{aligned}\right.
\end{align*}
\end{theorem}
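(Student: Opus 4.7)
Essentially all of the computation has already been carried out in the paragraphs preceding the theorem; the task is simply to package it uniformly across the two parities of $l(\lambda)$ and then translate from the times $t$ to the variables $x$ via the Miwa substitution. My plan has three stages: reduce to an even-length expectation, apply Wick's theorem, and perform the Miwa substitution.

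First, I reduce everything to a vacuum expectation of an even number of neutral fermions. For even $l(\lambda)$ this is automatic. For odd $l(\lambda)$, the identity $|1\rangle = 2^{1/2}\phi_0|0\rangle$ converts the right-hand side into $2^{1/2}\langle 0|e^{H_+(t)}\phi_{\lambda_1}\cdots\phi_{\lambda_l}\phi_0|0\rangle$, which involves $l+1$ (even many) neutral fermions. Second, I apply Wick's theorem for neutral fermions: using $e^{H_+(t)}|0\rangle = |0\rangle$, the expectation can be realized as the vacuum expectation of a product of the conjugated operators $\tilde{\phi}_i := e^{H_+(t)}\phi_i e^{-H_+(t)}$, which by \eqref{hp2} are still (formal) linear combinations of neutral fermions, and Wick's theorem then presents it as the Pfaffian of the pair expectations $\langle 0|e^{H_+(t)}\phi_{\mu_i}\phi_{\mu_j}|0\rangle$. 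Those pair expectations are already computed: equations \eqref{eq:phiphi}--\eqref{eq:qab} identify them with $\tfrac{1}{2}q_{\mu_i,\mu_j}(\tfrac{1}{2}t)$, while specializing one of the indices to $0$ and using $q_0 = 1$ yields $\tfrac{1}{2}q_{\mu_i}(\tfrac{1}{2}t)$ for the border entries in the odd-length case. Multiplying by $2^{l(\lambda)/2}$ in the even case and $2^{l(\lambda)/2}\cdot 2^{1/2}$ in the odd case absorbs the $2^{-m}$ produced by the Pfaffian and matches the two Pfaffian definitions of $Q_\lambda(\tfrac{1}{2}t)$ displayed just above the theorem.

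Third, I pass to $Q_\lambda(x)$ by the Miwa substitution $t_n = \tfrac{2}{n}\sum_i x_i^n$ (odd $n$), under which $2\xi(\tfrac{1}{2}t,z) = \sum_{n\text{ odd}} \tfrac{2}{n}(\sum_i x_i^n) z^n$, so its exponential equals the classical generating series $\prod_i (1+x_i z)/(1-x_i z) = \sum_k q_k(x) z^k$ of the Schur $Q$-variables, whence $q_k(\tfrac{1}{2}t) = q_k(x)$ and $Q_\lambda(\tfrac{1}{2}t) = Q_\lambda(x)$. The main delicate point in the whole argument is the bookkeeping of the factors of $2^{1/2}$ coming from $|1\rangle = 2^{1/2}\phi_0|0\rangle$ together with the $\tfrac{1}{2}$ appearing in each pair expectation; this is precisely what forces the parity case-split $\alpha(\lambda)\in\{0,1\}$ in the statement.
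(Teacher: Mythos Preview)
Your proposal is correct and follows precisely the route the paper takes: the computations \eqref{eq:phiphi}--\eqref{eq:qab} and the two Pfaffian displays immediately preceding the theorem already establish the identity in the time variables $t$, with the Wick reduction and the $|1\rangle=2^{1/2}\phi_0|0\rangle$ parity trick exactly as you describe, and the Miwa substitution then converts $Q_\lambda(\tfrac12 t)$ to $Q_\lambda(x)$. There is nothing to add.
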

\begin{remark}\label{rmk:ShurQ}
One can also define the Schur Q-function by $t_-$ part, which corresponds to the other Miwa variables
\begin{align*}
t_{-n}=\frac{2}{n}\sum_{i}y_i^n,\quad \text{n odd.}
\end{align*} 
For the distinct partition $\lambda\in\text{DP}$, one has
$$Q_\lambda(y)=(-1)^{|\lambda|}2^{\frac{l(\lambda)}{2}}\langle \alpha(\lambda)|\phi_{-\lambda_l}\cdots\phi_{-\lambda_1}e^{H_-(t)}|0\rangle.$$
\end{remark}

Another symmetric function, the Schur P-function $P_\lambda(x)$, can be defined as
\begin{align*}
P_\lambda(x)=2^{-l(\lambda)} Q_\lambda(x)
\end{align*}
such that $\langle P_\lambda, Q_\mu\rangle=\delta_{\lambda,\mu}$. It is straightforward to write $P_{\lambda}$ in the terms of neutral fermions by Theorem \ref{schurq} and Remark \ref{rmk:ShurQ}.

Now we are ready to demonstrate the following theorem.
\begin{theorem}\label{thm:Paf}
The correlation function can be expressed in terms of neutral fermions as
\begin{align*}
\rho(A)=\frac{1}{Z}\sum_{A\subset \sigma(\lambda)}P_\lambda(x)Q_\lambda(y)=\frac{1}{Z}\langle 0|e^{H_+(t)}(\prod_{a\in A}(-1)^a\phi_a\phi_{-a})e^{H_-(t)}|0\rangle.
\end{align*}
Moreover, it can be expressed as a Pfaffian point process
\begin{align}\label{ppp}
\rho(A)=\Pf(K(a_i,a_j))_{a_{ i},a_{ j}\in\pm A},\quad K(a,b)=\langle 0|e^{H_+(t)}e^{-H_-(t)}\phi_a\phi_be^{H_-(t)}e^{-H_+(t)}|0\rangle.
\end{align}
\end{theorem}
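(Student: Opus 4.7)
The plan is to prove the theorem in two stages: (i) bosonize $P_\lambda(x)Q_\lambda(y)$ via the Boson--Fermion correspondence and recognize the constraint $A\subset\sigma(\lambda)$ as the insertion of a projection operator, and (ii) apply Wick's theorem to the resulting Gaussian correlator to obtain the Pfaffian.

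For the first equality, Theorem \ref{schurq} and Remark \ref{rmk:ShurQ}, combined with $P_\lambda = 2^{-l(\lambda)}Q_\lambda$, yield
\begin{align*}
P_\lambda(x)Q_\lambda(y) = (-1)^{|\lambda|}\,\langle 0|e^{H_+(t)}|v_\lambda\rangle\,\langle v_\lambda^*|e^{H_-(t)}|0\rangle,
\end{align*}
where $|v_\lambda\rangle := \phi_{\lambda_1}\cdots\phi_{\lambda_l}|\alpha(\lambda)\rangle$ and $\langle v_\lambda^*| := \langle\alpha(\lambda)|\phi_{-\lambda_l}\cdots\phi_{-\lambda_1}$. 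I would then establish the biorthogonality $\langle v_\mu^*|v_\lambda\rangle = (-1)^{|\lambda|}\delta_{\mu,\lambda}$ by anticommuting $\phi_{-\lambda_i}$ past the $\phi_{\lambda_j}$'s (using $[\phi_a,\phi_{-a}]_+=(-1)^a$ and $\phi_{-a}|\alpha\rangle=0$ for $a>0$), which gives the completeness relation $\sum_{\lambda\in\mathrm{DP}}(-1)^{|\lambda|}|v_\lambda\rangle\langle v_\lambda^*| = \mathbf{1}_{\mathcal{F}_B}$. Next, the operator $\Pi_A := \prod_{a\in A}(-1)^a\phi_a\phi_{-a}$ is diagonal in this basis with eigenvalue $\mathbf{1}[A\subset\sigma(\lambda)]$: each factor is a number operator detecting whether the mode $a$ is present in $|v_\lambda\rangle$, as follows from $[\phi_a,\phi_{-a}]_+=(-1)^a$, $\phi_a^2=0$ ($a\ne0$), and $\phi_{-a}|0\rangle=0$ ($a>0$). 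Inserting the completeness relation between $\Pi_A$ and $e^{H_-(t)}$ then produces $\sum_{A\subset\sigma(\lambda)}P_\lambda(x)Q_\lambda(y) = \langle 0|e^{H_+(t)}\Pi_A e^{H_-(t)}|0\rangle$, which is the first equality.

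For the Pfaffian, I observe that $[H_+(t),H_-(t)]$ is a central scalar by \eqref{z}, so BCH gives $Z = \langle 0|e^{H_+}e^{H_-}|0\rangle = e^{[H_+,H_-]}$, and a short manipulation using $\langle 0|e^{H_-}=\langle 0|$ and $e^{\pm H_+}|0\rangle=|0\rangle$ reduces the stated kernel to $K(a,b) = Z^{-1}\langle 0|e^{H_+}\phi_a\phi_b e^{H_-}|0\rangle$. Since $H_\pm(t)$ are bilinear in the neutral fermions, the states $e^{H_-}|0\rangle$ and $\langle 0|e^{H_+}$ are Bogoliubov vacua of the Clifford algebra, and Wick's theorem yields
\begin{align*}
\frac{1}{Z}\langle 0|e^{H_+}\phi_{c_1}\cdots\phi_{c_{2n}}e^{H_-}|0\rangle = \Pf\bigl(K(c_i,c_j)\bigr)_{1\le i<j\le 2n}
\end{align*}
for any ordered tuple $(c_i)$. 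Applied to the product $\phi_{a_1}\phi_{-a_1}\cdots\phi_{a_n}\phi_{-a_n}$ appearing in $\Pi_A$, together with the scalar $\prod_{a\in A}(-1)^a$ pulled out of $\Pi_A$, this identifies $\rho(A)$ with the claimed Pfaffian indexed by $\pm A$, once an ordering of $\pm A$ is fixed so as to absorb the sign $\prod_{a\in A}(-1)^a$ and the permutation sign relating the two orderings.

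The main obstacle I anticipate is the careful justification of the fermionic Wick theorem in this precise form. One must verify that conjugation by $e^{H_\pm(t)}$ implements a Bogoliubov transformation on the neutral Clifford algebra---an analogue of \eqref{hp2} at the level of Fourier modes---so that standard Wick contractions on the shifted operators reproduce the $2n$-point function as a Pfaffian of the two-point kernel $K$. A secondary concern is the consistent bookkeeping of signs, namely $(-1)^{|\lambda|}$ in the completeness relation and $(-1)^a$ in $\Pi_A$, together with the dependence of $\mathrm{Pf}$ on the ordering of the index set $\pm A$, all of which must conspire to produce the clean expression \eqref{ppp}.
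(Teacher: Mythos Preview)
Your proposal is correct and follows essentially the same route as the paper: expand $\langle 0|e^{H_+(t)}$ and $e^{H_-(t)}|0\rangle$ over strict partitions via Theorem~\ref{schurq} and Remark~\ref{rmk:ShurQ}, use the biorthogonality $\langle v_\mu^*|v_\lambda\rangle=(-1)^{|\lambda|}\delta_{\mu,\lambda}$ to obtain the first identity, and then apply Wick's theorem for the Pfaffian. The one refinement worth noting is that the paper does not invoke Wick for Bogoliubov vacua as you plan; instead it sets $G=e^{H_+(t)}e^{-H_-(t)}$, $\Phi_i=G\phi_iG^{-1}$, and uses \eqref{z} together with $\langle 0|e^{H_-}=\langle 0|$, $e^{H_+}|0\rangle=|0\rangle$ to rewrite $\rho(A)=\langle 0|\prod_{a\in A}(-1)^a\Phi_a\Phi_{-a}|0\rangle$, so that the \emph{standard} vacuum Wick theorem applies directly to the dressed fermions $\Phi_i$ (which are linear in the $\phi_j$ by \eqref{hp2})---this sidesteps exactly the obstacle you flagged.
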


\begin{proof}
By using theorem \ref{schurq}, one can basically obtain
\begin{align*}
&\langle 0|e^{H_+(t)}=\sum_{\lambda\in \text{DP}} (-1)^{|\lambda|}2^{\frac{l(\lambda)}{2}}P_\lambda(x)\langle\alpha(\lambda) |\phi_{-\lambda_l}\cdots\phi_{-\lambda_1},\\
&e^{H_-(t)}|0\rangle=\sum_{\lambda\in\text{DP}}2^{-\frac{l(\lambda)}{2}}Q_\lambda(y)\phi_{\lambda_1}\cdots\phi_{\lambda_l}|\alpha(\lambda)\rangle.
\end{align*}
Moreover, for two partitions $\lambda=(\lambda_1,\cdots,\lambda_l)$ and $\mu=(\mu_1,\cdots,\mu_k)$, one has
$$\langle\alpha(\lambda)|\phi_{-\lambda_l}\cdots\phi_{-\lambda_1}\phi_{\mu_1}\cdots\phi_{\mu_{k}}|\alpha(\mu)\rangle=(-1)^{|\lambda|}\delta_{\lambda,\mu},$$
then it is true that
\begin{align*}
\sum_{A\subset\sigma(\lambda)}P_\lambda(x)Q_\lambda(y)=\langle 0|e^{H_+(t)}(\prod_{a\in A}(-1)^a\phi_a\phi_{-a})e^{H_-(t)}|0\rangle.
\end{align*}
In addition, since
\begin{align*}
\log Z&=\sum_{i,j}(\log(1+x_iy_j)-\log(1-x_iy_j))=\sum_{i,j}\sum_{k>0}\frac{1}{k}(1-(-1)^k)x_i^ky_j^k\\
&=\sum_{i,j}\sum_{n>0,odd} \frac{2}{n}x_i^ny_j^n=\sum_{n>0,odd}\frac{n}{2}t_nt_{-n}
\end{align*}
and by the use of \eqref{z}, we can get $e^{H_+(t)}e^{-H_-(t)}=Ze^{H_-(t)}e^{-H_+(t)}$. Finally, if we denote $G=e^{H_+(t)}e^{-H_-(t)}$ and $\Phi_i=G\phi_iG^{-1}$, it follows
\begin{align*}
\rho(A)&=\frac{1}{Z}\langle 0|e^{H_+(t)}(\prod_{a\in A}(-1)^a\phi_a\phi_{-a})e^{H_-(t)}|0\rangle\\
&=\frac{1}{Z}\langle 0|e^{H_+(t)}G^{-1}(\prod_{a\in A}(-1)^a\Phi_a\Phi_{-a})Ge^{H_-(t)}|0\rangle\\
&=\langle 0|e^{H_-(t)}(\prod_{a\in A}(-1)^a\Phi_a\Phi_{-a})e^{H_+(t)}|0\rangle=\langle 0|\prod_{a\in A}(-1)^a\Phi_a\Phi_{-a}|0\rangle.
\end{align*}
Thus, by Wick's theorem, we know it is a Pfaffian with expression \eqref{ppp}.
\end{proof}

\begin{remark}
Theorem \ref{thm:Paf} has been derived in \cite{matsumoto2005correlation} from a different point of view. See \cite{matsumoto2005correlation,vuletic2007shifted} for more details on the properties of the correlation functions of the shifted Schur measure and the shifted Schur process. In our paper, $\rho(A)$ is constructed using netural fermions, and thus it is related natually to tau functions of BKP hierarchies, which are discussed in Appendix \ref{app:BKP}.
\end{remark}

\section{matrix integrals solution of BKP hierarchy and Pfaffian point process}\label{sec:Mat}
There have been several examples between the matrix integrals solution of integrable hierarchy and determinantal point process and the fact lies in the $\tau$-functions (matrix integrals solution) of those integrable hierarchies can be viewed as the normalization constant of those point processes. In \cite{adler95}, the matrix integrals solution of Toda (KP) hierarchy  and determinantal point process in the configuration space in $\mathbb{R}^n$ have been discussed with details. The determinantal point process in the configuration space in $\mathbb{C}^n$ and its connection with matrix integrals solution of Topelitz lattice are shown in \cite{vanmoerbeke2011Random}. 

In this part, we mainly discuss the matrix integrals solution of BKP hierarchy, which can help us to induce a novel Pfaffian point process. For this purpose, firstly, it has been demonstrated that the partition function of Bures ensemble with suitable time flows can be regarded as the $\tau$-function of BKP hierarchy \cite{hu17,OST,orlov16}.
\begin{proposition}
When time flows are introduced, the partition function of Bures ensemble
\begin{align}\label{eq:tau}
\tau_n=\frac{1}{n!}\int_{\mathbb{R}_+^n}\prod_{1\leq i<j\leq n}\frac{(x_i-x_j)^2}{x_i+x_j}\prod_{i=1}^n\omega(x_i;t)dx_i,\quad\omega(x;t)=\omega(x)\exp(\sum_{k\geq1,odd}t_kx^k)
\end{align}
can be viewed as the $\tau$-function of BKP hierarchy for some formal weight $\omega(x)$. Moreover, if we denote the moments $$\omega_{i,j}=\iint_{\mathbb{R}_+^2}\frac{x-y}{x+y}x^iy^j\omega(x;t)\omega(y;t)dxdy,\quad\omega_i=\int_{\mathbb{R}_+}x^i\omega(x;t)dx$$ then this partition function can be written in terms of Pfaffian as
\begin{align*}
\tau_n=\left\{
 \begin{array}{ll}
 \Pf\left(\begin{array}{c}
\omega_{i,j}
\end{array}
\right)_{i,j=1}^{2m} & n=2m,\\
\ \\
\Pf\left(
\begin{array}{ccc}
0&\vline&
\begin{array}{cc}
\omega_i
\end{array}
\\
\hline
\begin{array}{c}
-\omega_j
\end{array}
&\vline&\omega_{i,j}
\end{array}
\right)_{i,j=1}^{2m+1} &n=2m+1.
 \end{array} 
 \right. 
\end{align*}
\end{proposition}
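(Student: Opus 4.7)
The plan is to reduce the claim to two classical tools: Schur's Pfaffian identity and a de Bruijn-type Pfaffian integration formula. I would start by factoring the Bures weight as
$$\prod_{1\le i<j\le n}\frac{(x_i-x_j)^2}{x_i+x_j}=\prod_{i<j}(x_i-x_j)\cdot\prod_{i<j}\frac{x_i-x_j}{x_i+x_j},$$
where the first factor is the Vandermonde determinant $\det(x_j^{i-1})_{i,j=1}^{n}$ and, for $n=2m$ even, the second is handled by Schur's identity $\prod_{i<j}\tfrac{x_i-x_j}{x_i+x_j}=\Pf\!\bigl(\tfrac{x_i-x_j}{x_i+x_j}\bigr)_{i,j=1}^{2m}$. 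This rewrites the integrand as a product of a determinant and a Pfaffian in the $x_i$'s, which is the precise shape required by de Bruijn's lemma.

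For $n=2m$ I would then invoke the de Bruijn identity
$$\frac{1}{n!}\int\det(\phi_i(x_j))\,\Pf(A(x_i,x_j))\prod_j d\mu(x_j)=\Pf\!\left(\iint \phi_i(x)\phi_j(y)\,A(x,y)\,d\mu(x)\,d\mu(y)\right)_{i,j},$$
valid for any antisymmetric kernel $A$; antisymmetry of the output matrix is automatic, as one checks by swapping the two integration variables and using antisymmetry of $A$. Specializing $\phi_i(x)=x^{i-1}$, $A(x,y)=\tfrac{x-y}{x+y}$ and $d\mu(x)=\omega(x;t)dx$ immediately produces the moment matrix $(\omega_{i,j})$ and gives $\tau_{2m}=\Pf(\omega_{i,j})$.

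For $n=2m+1$ the key step is to border Schur's identity by an auxiliary variable: since $\lim_{y\to\infty}\tfrac{x-y}{x+y}=-1$, one has
$$\prod_{1\le i<j\le 2m+1}\frac{x_i-x_j}{x_i+x_j}=\Pf\!\begin{pmatrix}0 & -\mathbf{1}^{\top}\\ \mathbf{1} & \bigl(\tfrac{x_i-x_j}{x_i+x_j}\bigr)_{i,j}\end{pmatrix},$$
and accordingly the Vandermonde must be augmented by the constant function $\phi_0\equiv 1$ so that the determinant and Pfaffian factors have compatible sizes. Running through the same de Bruijn integration then leaves the principal block as $(\omega_{i,j})$, while the bordering row/column collapses to the single moments $\int \phi_i(x)\,d\mu(x)=\omega_i$, reproducing the bordered Pfaffian displayed in the proposition. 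I expect the main technical obstacle to be precisely this odd-$n$ bookkeeping: one has to justify the limiting argument producing the bordered Schur Pfaffian rigorously (rather than as a formal manipulation) and then track signs and the placement of the distinguished row/column through the de Bruijn identification, so that the final output reproduces the exact block shape with $\omega_i$ on the top border and $-\omega_j$ on the side border.
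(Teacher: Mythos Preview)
The paper does not itself prove this proposition; it is imported from \cite{hu17,OST,orlov16}, and the text after the statement only remarks that $\tau_n$ ``can be constructed from the vacuum expectation form \eqref{ve} if we choose the group-like element $G$ appropriately.''

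Your derivation of the Pfaffian formula---factor the Bures interaction as Vandermonde times $\prod_{i<j}\frac{x_i-x_j}{x_i+x_j}$, replace the latter by a Pfaffian via Schur's identity, and integrate with the (bordered) de~Bruijn formula---is correct and is the standard argument; indeed a few lines later the paper itself invokes ``the Pfaffian Schur identity and de Bruijn formula \cite{debruijn55,forrester16}'' for the related invertibility of the moment matrix. One small slip: with the auxiliary variable placed in the first slot and sent to $\infty$, the $(0,j)$ entries are $\frac{y-x_j}{y+x_j}\to +1$, not $-1$; either put $y$ in the last slot or send $y\to 0^+$ to obtain the signs you wrote. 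As you anticipated, this is pure bookkeeping.

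The genuine gap is that your proposal does not touch the first clause of the proposition, that $\tau_n(t)$ is a tau function of the BKP hierarchy. The Pfaffian representation alone does not yield this. In the cited references it is established either by verifying the BKP bilinear identity directly from the derivative rules $\partial_{t_k}\omega_{i,j}=\omega_{i+k,j}+\omega_{i,j+k}$ combined with Pfaffian Pl\"ucker-type identities, or---in the spirit of the present paper---by exhibiting a neutral-fermion group-like element $G$ with $\tau_n=\langle 0|e^{H_+(t)}G|0\rangle$. One of these arguments is needed to complete the proof.
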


This proposition is consistent with the fact that the irreducible highest weight representation is splitted into two families $\mathcal{F}_B^0\oplus\mathcal{F}_B^1$ and each of them has individual form. Remarkablely, this $\tau$-function of BKP hierarchy can be constructed from the vacuum expectation form \eqref{ve} if we choose the group like element $G$ appropriately. Now we would like to demonstrate how to relate this $\tau$-function of BKP equation to a Pfaffian point process proposed in \cite{rains00}.

\begin{proposition}\label{pfaffianpointprocess}
The matrix integrals solution $\tau_{2m}$ of BKP hierarchy in \eqref{eq:tau} can induce a Pfaffian point process. That is, let $(X,d\mu)$ be a measure space, $\{\phi_i(x)=x^i, \, i=0,\cdots,2m-1\}$ be functions from $X$ to $\mathbb{C}$ and $\epsilon(x,y)=\frac{x-y}{x+y}$ be skew symmetric function from $X\times X\to \mathbb{C}$, then moments
\begin{align*}
M_{i,j}=\int_{x,y\in X}\phi_i(x)\phi_j(y)\epsilon(x,y)d\mu(x)d\mu(y)
\end{align*}
form an invertible antisymmetric matrix $M$\footnote{It is noted that in this case, $X$ should be a configuration space in $\mathbb{R}_+^{2m}$ to ensure the existence of the moments $M_{i,j}$.}. For a finite subset $S=\{x_1,\cdots,x_l\}\subset X$ with $l\leq 2m$, we can define a correlation function
\begin{align}\label{eq:correlation}
R(S)=\frac{1}{(2m-l)!\pf(M)}\int_{x_{l+1},\cdots,x_{2m}\in X}\prod_{1\leq i<j\leq 2m}\frac{(x_i-x_j)^2}{x_i+x_j}\prod_{l+1\leq j\leq 2m} d\mu(x_j),
\end{align}
which is related to a Pfaffian point process with kernel {\fontsize{9pt}{11pt}
\begin{align*}
K(x,y)=\left(\begin{array}{cc}
\sum_{0\leq i,j\leq 2m-1}\phi_i(x)M^{-T}_{i,j}\phi_j(y)&\sum_{0\leq i,j\leq 2m-1}\phi_i(x)M^{-T}_{i,j}(\epsilon\cdot\phi_j)(y)\\
\sum_{0\leq i,j\leq 2m-1}(\epsilon\cdot\phi_i)(x)M^{-T}_{i,j}\phi_j(y)&-\epsilon(x,y)+\sum_{0\leq i,j\leq 2m-1}(\epsilon\cdot\phi_i)(x)M^{-T}_{i,j}(\epsilon\cdot\phi_j)(y)\end{array}
\right),
\end{align*}}
where $(\epsilon\cdot f)(x)=\int_{y\in X}\epsilon(x,y)f(y)dx$.
\end{proposition}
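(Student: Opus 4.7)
The plan is to proceed in three stages: rewrite the Bures--Vandermonde density as a product $\det\cdot\Pf$, marginalize out the $2m-l$ unobserved variables via a de Bruijn--type integration identity, and then collapse the resulting block Pfaffian to a $2l\times 2l$ one by a Pfaffian Schur--complement calculation. For the first stage, Schur's Pfaffian identity
\[
\Pf\Bigl(\frac{x_i-x_j}{x_i+x_j}\Bigr)_{i,j=1}^{2m}=\prod_{1\leq i<j\leq 2m}\frac{x_i-x_j}{x_i+x_j}
\]
combined with the Vandermonde identity yields, up to an overall sign that cancels in the normalization,
\[
\prod_{1\leq i<j\leq 2m}\frac{(x_i-x_j)^2}{x_i+x_j}=\det\bigl(\phi_i(x_j)\bigr)_{\substack{0\leq i\leq 2m-1\\ 1\leq j\leq 2m}}\cdot\Pf\bigl(\epsilon(x_i,x_j)\bigr)_{i,j=1}^{2m},
\]
so the integrand in \eqref{eq:correlation} is already a product of a determinant and a Pfaffian.

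For the second stage, expand the determinant over $S_{2m}$ and absorb the $(2m-l)!$ permutations of the dummy variables $x_{l+1},\ldots,x_{2m}$. Integrating via a de Bruijn--type identity, every pairing inside $\Pf(\epsilon)$ of two dummy variables produces a moment entry $M_{i,j}$, every pairing of a dummy variable with a fixed point $x_\alpha$ produces $(\epsilon\cdot\phi_i)(x_\alpha)$, and every pairing of two fixed points produces the bare $\epsilon(x_\alpha,x_\beta)$. Collecting these contributions into a single Pfaffian of a $(2m+2l)\times(2m+2l)$ block matrix yields
\[
R(S)=\frac{1}{\Pf(M)}\Pf\begin{pmatrix}M & B(S)\\ -B(S)^T & C(S)\end{pmatrix},
\]
where $M$ is the $2m\times 2m$ moment matrix, $B(S)$ is a $2m\times 2l$ block (two columns per fixed point $x_\alpha$, one carrying $\phi_i(x_\alpha)$ and one carrying $(\epsilon\cdot\phi_i)(x_\alpha)$), and $C(S)$ is the $2l\times 2l$ skew block built from the $\epsilon(x_\alpha,x_\beta)$ entries.

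For the third stage, the Pfaffian Schur--complement identity
\[
\Pf\begin{pmatrix}M & B\\ -B^T & C\end{pmatrix}=\Pf(M)\Pf\bigl(C+B^T M^{-1}B\bigr)
\]
cancels the $\Pf(M)$ in the denominator and reduces $R(S)$ to a $2l\times 2l$ Pfaffian. Grouping the indices into $l$ pairs (a $\phi$-slot and an $\epsilon\cdot\phi$-slot per point), the four sub-blocks indexed by any two points $x_\alpha,x_\beta$ match the four entries of the claimed kernel $K(x_\alpha,x_\beta)$: the bare $-\epsilon(x_\alpha,x_\beta)$ in the $(2,2)$-entry is inherited from $C(S)$, while the four sub-blocks of $B^T M^{-1}B$ produce the $\phi_i(x)M^{-T}_{i,j}\phi_j(y)$, $\phi_i(x)M^{-T}_{i,j}(\epsilon\cdot\phi_j)(y)$, $(\epsilon\cdot\phi_i)(x)M^{-T}_{i,j}\phi_j(y)$, and $(\epsilon\cdot\phi_i)(x)M^{-T}_{i,j}(\epsilon\cdot\phi_j)(y)$ contributions once the skew-symmetry relation $M^{-1}=-M^{-T}$ is used. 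The main obstacle will be the sign bookkeeping in the second stage: controlling the signs from the determinant expansion, the pairings in the Pfaffian expansion, and the reassembly into the block Pfaffian, so that the bare $-\epsilon$ in the $(2,2)$-entry of $K$ emerges with the correct sign rather than being cancelled by the $B^T M^{-1}B$ correction.
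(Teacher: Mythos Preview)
The paper does not actually prove this proposition: immediately after the statement it says that the Pfaffian point process structure of the correlation function \eqref{eq:correlation} ``has been proved in \cite{rains00}'', and only adds that the specific choices $\phi_i(x)=x^i$, $\epsilon(x,y)=\frac{x-y}{x+y}$ make $\Pf(M)$ coincide with the BKP tau function, with invertibility of $M$ following from the Schur Pfaffian identity and de Bruijn's formula. So your proposal is not competing against a proof in the paper; it is supplying one.

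Your three-stage argument (Schur Pfaffian $\times$ Vandermonde factorisation, a block-Pfaffian partial de~Bruijn identity, then the Pfaffian Schur complement $\Pf\!\left(\begin{smallmatrix}M&B\\-B^T&C\end{smallmatrix}\right)=\Pf(M)\,\Pf(C+B^TM^{-1}B)$) is a correct and standard route to the kernel $K(x,y)$; it is essentially the mechanism behind Rains's result. It is, however, \emph{not} the method the paper uses when it does give a proof in the odd-order analogue (the next proposition). There the paper works top-down: it verifies $R(S)=\Pf(K^+(S))$ directly at the maximal level $|S|=2m+1$ by row/column operations turning $K(S)$ into a congruence $\Pf\!\left(\begin{smallmatrix}RM^{-T}R^T&0\\0&-\epsilon^+\end{smallmatrix}\right)$, then uses singularity of $R$ for $|S|>2m+1$, and for $|S|<2m+1$ inducts via the one-point reduction $\int\Pf(K(S\cup\{x_l\}))\,d\mu(x_l)=(2m-l+2)\Pf(K(S))$. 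Your bottom-up approach integrates out the dummy variables in one stroke and then Schur-complements; the paper's approach verifies the full density and then peels off one integration at a time. Both are valid; yours is closer to the original derivation in \cite{rains00}, while the paper's inductive scheme avoids having to state and prove the generalised de~Bruijn identity for mixed fixed/integrated variables.

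That generalised de~Bruijn step is indeed where the work lies in your route. Your verbal description of the pairings (dummy--dummy $\to M_{i,j}$, dummy--fixed $\to(\epsilon\cdot\phi_i)(x_\alpha)$, fixed--fixed $\to\epsilon(x_\alpha,x_\beta)$) is correct, but reassembling the resulting double sum over $S_{2m}$ and perfect matchings into the single Pfaffian $\Pf\!\left(\begin{smallmatrix}M&B(S)\\-B(S)^T&C(S)\end{smallmatrix}\right)$ of size $(2m+2l)\times(2m+2l)$ is a genuine lemma (an Ishikawa--Wakayama/de~Bruijn minor-summation variant), not just sign bookkeeping; you should either cite it precisely or prove it. Once that is in place, your Stage~3 goes through verbatim and yields the stated $K(x,y)$.
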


The fact that correlation function \eqref{eq:correlation} is related to the Pfaffian point process with a skew symmetric kernel $K(x,y)$ has been proved in \cite{rains00}. Here we just take specific $\{\phi_i(x)\}_{i=0}^{2m-1}$ as monomials and $\epsilon(x,y)$ as specific skew symmetric inner product kernel so that $\pf(M)$ is indeed the $\tau$-function of BKP hierarchy if proper time flows are introduced. Moreover, the invertibility of skew symmetric matrix $M$ is based on Pfaffian Schur identity and de Bruijn formula \cite{debruijn55,forrester16}. Then a natural question is: whether the $\tau_{2m+1}$ of BKP equation can induce a such Pfaffian point process? To do so, a notation about Pfaffian of odd order need to be firstly clarified.

As usual, a Pfaffian is only defined on skew-symmetric matrices of even order; however, it takes some advantages to consider odd orders. In \cite{debruijn55}, Pfaffian can be defined in any $n\times n$ skew-symmetric  matrix $A$ as
\begin{align}\label{eq:pf}
\Pf(A)=\frac{1}{2^m m!}\sum_{j_1=1}^n\cdots\sum_{j_n=1}^n\sigma\left(\begin{array}{ccc}j_1&\cdots& j_n\\
1&\cdots&n\end{array}\right)a_{j_1j_2}\cdots a_{j_{2m-1}j_{2m}},\quad m=[\frac{1}{2}n]
\end{align}
with $\sigma$ the perturbation of set $\{j_1,\cdots,j_n\}$ to $\{1,\cdots,n\}$. It can be found that the even case is just the same as the original definition of Pfaffian, but the odd ones involve something new. Indeed, let $K$ be a skew-symmetric $n\times n$ matrix with $n$ odd and $K^+$ arises from $K$ by adding an $(n+1)$-th column consisting of $n$ elements $1$, an $(n+1)$-th row consisting of $n$ elements $-1$ and $k_{n+1,n+1}^+=0$. Then one can show $\Pf(K)$ defined by \eqref{eq:pf} is the same with $\Pf(K^+)$. Later on, we denote $\Pf(K^+)$ as the Pfaffian of $K$ modified by the above process.

\begin{proposition}
Let $(X,d\mu)$ be a measure space and $\{\phi_i,\,i=0,\cdots,2m\}$ be $2m+1$ functions from $X$ to $\mathbb{C}$. Assume that $\epsilon(x,y)$ is a skew-symmetric function from $X\times X\to \mathbb{C}$ and related moments are defined as
\begin{align*}
&M_{i,j}=\int_{x,y\in X}\phi_i(x)\epsilon(x,y)\phi_j(y)d\mu(x)d\mu(y),\quad 0\leq i,j\leq 2m,\\
&M_{i,2m+1}=-M_{2m+1,i}=\int_{x\in X}\phi_i(x) d\mu(x),\quad\,\,\,\, 0\leq i\leq 2m.
\end{align*}
Assume the skew-symmetric moment matrix $M_{(2m+2)\times(2m+2)}$ is well-defined and invertible, then for a finite subset $S=\{x_1,\cdots,x_l\}\subset X$ with $l\leq 2m+1$, we can define a correlation function 
\begin{align*}
R(S;\phi,\epsilon)=\frac{1}{(2m+1-l)!\Pf(M^+)}\int_{x_{l+1},\cdots,x_{2m+1}\in X}\det(\phi_i(x_j))\Pf(\epsilon^+(x_i,x_j))\prod_{l+1\leq j\leq 2m+1}d\mu(x_j).
\end{align*}
For $|S|>2m+1$, $R(S;\phi,\epsilon)=0$ and for $|S|\leq 2m+1$, one has $R(S;\phi,\epsilon)=\Pf(K^+(S))$, where $K^+$ is a skew-symmetric matrix kernel
\begin{align*}
&K^+(x,y)=\left(\begin{array}{cc|cc}
&\hspace{-22ex}{\raisebox{-1.5ex}[0pt]{$K(x,y)$}}&\sum_{i=0}^{2m}\phi_i(x)M_{i,2m+1}^{-T}&0\\ 
&&\sum_{i=0}^{2m}(\epsilon\cdot\phi_i)(x)M_{i,2m+1}^{-T}&-1\vspace{2pt} \\ \hline  
\raisebox{-2pt}{$\sum_{i=0}^{2m}M_{2m+1,i}^{-T}\phi_i(y)$}&\raisebox{-2pt}{$\sum_{i=0}^{2m}M_{2m+1,i}^{-T}(\epsilon\cdot\phi_i)(y)$}&\raisebox{-2pt}{$0$}&\raisebox{-2pt}{$0$}\\
0&1&0&0
\end{array}
\right)
\end{align*}
with
\begin{align*}
K(x,y)=\left(\begin{array}{cc}
\sum_{0\leq i,j\leq 2m}\phi_i(x)M^{-T}_{i,j}\phi_j(y)&\sum_{0\leq i,j\leq 2m}\phi_i(x)M^{-T}_{i,j}(\epsilon\cdot\phi_j)(y)\\
\sum_{0\leq i,j\leq 2m}(\epsilon\cdot\phi_i)(x)M^{-T}_{i,j}\phi_j(y)&-\epsilon(x,y)+\sum_{0\leq i,j\leq 2m}
(\epsilon\cdot\phi_i)(x)M^{-T}(\epsilon\cdot\phi_j)(y)
\end{array}
\right)
\end{align*}
and $\epsilon$ is an operator defined by $(\epsilon\cdot f)(x)=\int_{y\in X}\epsilon(x,y)f(y)d\mu(y)$.
\end{proposition}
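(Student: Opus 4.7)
The plan is to reduce the odd-order case to the even-order case of the preceding proposition by invoking the augmentation convention $K\mapsto K^+$ that the authors just introduced for odd-dimensional Pfaffians. The whole point of the $K^+$ trick is to convert the data into an even-order system; once that translation is made cleanly, the previous result applies and one only needs to restrict the resulting kernel back.

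\textbf{Step 1: Augment the data.} I would introduce an auxiliary point $*$ outside $X$ and set $\tilde{\phi}_i=\phi_i$ for $0\le i\le 2m$, $\tilde{\phi}_{2m+1}(x)\equiv 1$, and extend $\epsilon$ to a skew-symmetric $\tilde{\epsilon}$ on $X\sqcup\{*\}$ by $\tilde{\epsilon}(x,*)=1$, $\tilde{\epsilon}(*,x)=-1$, $\tilde{\epsilon}(*,*)=0$. Under this augmentation the moment-matrix entries $M_{i,2m+1}=\int_X\phi_i\,d\mu$ are realized as $\int \tilde{\phi}_i(x)\tilde{\epsilon}(x,*)\tilde{\phi}_{2m+1}(*)\,d\mu(x)$, so the assembled $(2m+2)\times(2m+2)$ skew-symmetric matrix is exactly the even-order moment matrix $\tilde{M}$ of $(\tilde{\phi},\tilde{\epsilon})$. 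By definition of $\Pf(K^+)$, we then have $\Pf(M^+)=\Pf(\tilde{M})$.

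\textbf{Step 2: Lift the integrand.} Using $\tilde{\phi}_{2m+1}\equiv 1$ on $X$ and the augmentation values of $\tilde{\epsilon}$ at the formal point $x_{2m+2}:=*$, Laplace expansion along the last row/column shows
\begin{align*}
\det\!\bigl(\phi_i(x_j)\bigr)_{\substack{0\le i\le 2m\\1\le j\le 2m+1}}\,\Pf\!\bigl(\epsilon^+(x_i,x_j)\bigr)_{1\le i,j\le 2m+1}
=\det\!\bigl(\tilde{\phi}_i(x_j)\bigr)_{0\le i\le 2m+1}\,\Pf\!\bigl(\tilde{\epsilon}(x_i,x_j)\bigr)_{1\le i,j\le 2m+2},
\end{align*}
with $x_{2m+2}=*$ treated as a fixed spectator. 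Thus the odd-order correlation function $R(S;\phi,\epsilon)$ on $X$ coincides with the even-order correlation function of $(\tilde{\phi},\tilde{\epsilon})$ for the configuration $S\cup\{*\}$ on $X\sqcup\{*\}$, normalized by $\Pf(\tilde{M})$. The vanishing statement $R(S;\phi,\epsilon)=0$ for $|S|>2m+1$ drops out at this stage, since then $|S\cup\{*\}|>2m+2$ exceeds the dimension and the lifted integrand vanishes by antisymmetry.

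\textbf{Step 3: Invoke the even-order proposition and extract $K^+$.} Applying the preceding even-order proposition to $(\tilde{\phi},\tilde{\epsilon})$ yields a Pfaffian point process with a $2\times 2$ block kernel $\tilde{K}(x,y)$ built from $\tilde{\phi}_i$, $(\tilde{\epsilon}\cdot\tilde{\phi}_i)$, and $\tilde{M}^{-T}$. Restricting to $S\cup\{*\}$ with $*$ always present, the kernel acquires a border given by $\tilde{K}(x,*)$ and $\tilde{K}(*,y)$. A direct computation, using $\tilde{\phi}_{2m+1}(y)\equiv 1$ at $y\in X$ together with $\tilde{\phi}_{2m+1}(*)=1$, $(\tilde{\epsilon}\cdot\tilde{\phi}_i)(*)=\int_X\tilde{\epsilon}(*,y)\tilde{\phi}_i(y)\,d\mu(y)=-\int_X\phi_i=-M_{2m+1,i}$, and the block-inverse decomposition of $\tilde{M}^{-T}$ indexed by $\{0,\dots,2m\}$ versus $\{2m+1\}$, collapses the border into precisely the $\sum_i\phi_i(x)M^{-T}_{i,2m+1}$ and $\sum_i(\epsilon\cdot\phi_i)(x)M^{-T}_{i,2m+1}$ terms together with the isolated $\pm 1$'s and the trailing zero $2\times 2$ block of the stated $K^+(x,y)$.

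\textbf{Main obstacle.} The real work is in Step 3: one must verify that the $\tilde{M}^{-T}$ block indexed by $\{0,\dots,2m\}\times\{0,\dots,2m\}$, when plugged into the formula from the even-order proposition, reproduces the original $K(x,y)$ rather than being ``polluted'' by the augmentation, and that the border rows/columns arrange into the displayed form with the $\pm 1$ entries landing in the correct off-diagonal positions. This is bookkeeping that can be handled by the standard Schur complement / block-inverse formula for $\tilde{M}$, but it is the step where signs and indices can easily go wrong, and is where I would spend most of the care.
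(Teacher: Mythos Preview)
Your reduction-to-even idea is a genuinely different route from the paper's proof. The paper does not augment and quote the even case; instead it treats $|S|=2m+1$ directly by rearranging $\Pf(K^+(S))$ into the block form
\[
(-1)^{m+1}\Pf\begin{pmatrix} RM^{-T}R^T & RM^{-T}(\epsilon\cdot R)^T\\ (\epsilon\cdot R)M^{-T}R^T & -\epsilon^+ + (\epsilon\cdot R)M^{-T}(\epsilon\cdot R)^T\end{pmatrix},
\]
factors it as $\det(R)\,\Pf(\epsilon^+)\,\Pf(M^{-T})$ via $\Pf(PAP^T)=\det(P)\Pf(A)$, handles $|S|>2m+1$ by singularity of $R$, and handles $|S|<2m+1$ by the one-step integration identity $\int_{x_l}\Pf(K(S\cup\{x_l\}))\,d\mu=(2m-l+2)\Pf(K(S))$ following Rains. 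Your augmentation approach is more conceptual; the paper's is more hands-on but avoids having to compare two different kernels.

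There is, however, a real gap in your Step~3, and the issue is not just bookkeeping. With the correct augmentation (one needs $\tilde\phi_i(*)=\delta_{i,2m+1}$ and $\tilde\phi_{2m+1}|_X=0$, not $\tilde\phi_{2m+1}\equiv 1$; otherwise your moment identity in Step~1 already fails), the even kernel $\tilde K$ does \emph{not} restrict to $K$ on $X\times X$. The $(1,1)$ block is fine, but since $(\tilde\epsilon\cdot\tilde\phi_{2m+1})(x)=1$ for $x\in X$, the $(1,2)$, $(2,1)$ and $(2,2)$ blocks each acquire rank-one pollution terms such as $\sum_{i\le 2m}\phi_i(x)M^{-T}_{i,2m+1}$; and the corner block is $\tilde K(*,*)=\bigl(\begin{smallmatrix}0&1\\-1&0\end{smallmatrix}\bigr)$, not the zero block appearing in $K^+$. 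So your reduction yields $R(S)=\Pf(\tilde K(S\cup\{*\}))$ for a kernel $\tilde K$ that is visibly different from the stated $K^+$, and a further Pfaffian-preserving row/column transformation is needed to pass from one to the other. That extra transformation exists (the pollution terms are exactly shaped to be absorbed into the border by elementary operations), but it is the actual content of the proof in your approach, and your proposal treats it as a side issue rather than the main step.
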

\begin{proof}
It is of importance to state the invertibility of moments matrix $M$ is equal to the non-singularity of $\det(\phi_i(x_j))\pf(\epsilon^+(x_i,x_j))$ due to the de Bruijn integral formula. Moreover, the proof includes three different cases, that is, we need to prove the cases $|S|=2m+1$, $|S|>2m+1$ and $|S|<2m+1$. Firstly, we are going to prove the case $|S|=2m+1$. Note that $\Pf(K(S))$ is a Pfaffian of $(4m+4)\times(4m+4)$ skew symmetric matrix and by basic row and column transformations, one can obtain
\begin{align*}
\Pf(K(S))=(-1)^{m+1}\left(\begin{array}{cc}
RM^{-T}R^{T}&RM^{-T}(\epsilon\cdot R)^{T}\\
(\epsilon\cdot R)M^{-T}R^T&-\epsilon^++(\epsilon\cdot R)M^{-T}(\epsilon\cdot R)^T
\end{array}\right)
\end{align*}
where $\epsilon^+$ is a matrix extended by the matrix $\epsilon$ and $R$ is a $(2m+2)\times(2m+2)$ matrix
\begin{align*}
R=\left(\begin{array}{cccc}
\phi_1(x_1)&\cdots&\phi_{2m+1}(x_1)&0\\
\vdots&&\vdots&\vdots\\
\phi_1(x_{2m+1})&\cdots&\phi_{2m+1}(x_{2m+1})&0\\
0&\cdots&0&1
\end{array}\right).
\end{align*}
Therefore, by expressing $(\epsilon\cdot\phi_j)(x)$ on the set $S$ as a linear combination of the functions $\phi_j(x)$, we can find
\begin{align*}
\Pf(K(S))&=(-1)^{m+1}\Pf\left(
\left(\begin{array}{cc}
I&\\
L&I\end{array}\right)\left(\begin{array}{cc}
RM^{-T}R^T&\\&-\epsilon^+\end{array}\right)\left(
\begin{array}{cc}
I&L^T\\
&I\end{array}\right)
\right).
\end{align*}
with $L$ a coefficient matrix of transformation from $(\epsilon\cdot R)$ to $R$. Noting the formula $\Pf(ABA^{T})=\det(A)\Pf(B)$ holds for all antisymmetric matrix $B$, the above equation can be rewritten as 
\begin{align*}
\Pf(K(S))=\det(R)\Pf(\epsilon^+)\Pf(M^{-T})=\frac{1}{\Pf(M)}\det(\phi_i(x_j))\Pf(\epsilon^+(x_i,x_j)).
\end{align*} 
 For the case $|S|>2m+1$, $R(S;\phi,\epsilon)=0$ is obvious since the matrix $R$ is singular if the elements of $S$ is more than $2m+1$. In addition, for the case $|S|<2m+1$, we just need to prove the following equality due to an induction
 \begin{align*}
\int_{x_l\in X}\Pf(K(S\cup\{x_l\}))d\mu(x_l)=(2m-l+2)\Pf(K(S)).
\end{align*}
This equality can be proved by using the method proposed in \cite{rains00} and we omit here. Thus we say the kernel $K^+$ here can induce a Pfaffian point process constituting of odd points.
\end{proof}

If one chooses $\epsilon(x,y)=\frac{x-y}{x+y}$ and $\phi_i(x)=x^i$, this Pfaffian point process coincides with the one induced by Bures ensemble with odd points \cite{forrester16}. Here we just show a general case for the Pfaffian point process with a skew-symmetric matrix kernel, which admits odd points and is different with the model proposed in \cite{rains00}. Moreover, the Pfaffian point process of even and odd points can be formulated by the unified one, if we use the de Bruijn's notation about Pfaffian.

\begin{proposition}
Let $(X,d\mu)$ be a measure space and $\{\phi_i,i=1,\cdots,n\}$ be $n$ functions from $X$ to $\mathbb{C}$. Assume that $\epsilon(x,y)$ is a skew symmetric function from $X\times X$ to $\mathbb{C}$ and the related partition function
\begin{align*}
\tau_n=\int_{x_1,\cdots,x_n\in X}\det(\phi_i(x_j))_{i,j=1}^n\Pf(\epsilon(x_i,x_j))_{i,j=1}^n\prod_{1\leq j\leq n}d\mu(x_j)
\end{align*}
is well-defined and invertible, 
then for a finite subset $S=\{x_1,\cdots,x_l\}\subset X$ with $l\leq n$, we can define an $l$-point correlation function
\begin{align*}
R(S;\phi,\epsilon)=\frac{1}{(n-l)!\tau_n}\int_{x_{l+1},\cdots,x_n\in X}\det(\phi_i(x_j))_{i,j=1}^n\Pf(\epsilon(x_i,x_j))_{i,j=1}^n\prod_{l+1\leq j\leq n}d\mu(x_j).
\end{align*}
Moreover, this correlation function is related to a skew symmetric matrix kernel $K$ for even $n$ and $K^+$ for odd $n$ with $R(S;\phi,\epsilon)=\Pf(K(S))$ (or $\Pf(K^+(S))$ respectively) if $|S|\leq n$ and $R(S;\phi,\epsilon)=0$ if $|S|>n$.
\end{proposition}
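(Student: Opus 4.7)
The plan is to reduce the unified statement to the two parity-specific propositions already proved (Proposition \ref{pfaffianpointprocess} for even $n$ and the immediately preceding one for odd $n$), by systematically using de Bruijn's convention \eqref{eq:pf}. Under that convention, the Pfaffian of an odd-dimensional skew-symmetric matrix $A$ equals $\Pf(A^+)$, where $A^+$ adjoins a row of $-1$'s, a column of $+1$'s, and a zero diagonal entry. The moments matrix $M$ and the kernel $K$ appearing in the odd-$n$ proposition are precisely the augmentations $M^+$ and $K^+$ of the objects in the present proposition. Hence the even- and odd-$n$ cases of the present statement are literally the two preceding propositions rewritten in de Bruijn's notation, and the task is to package the two proofs uniformly.

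First I would dispose of the case $|S|>n$. The integrand factor $\det(\phi_i(x_j))_{i,j=1}^n$ has only $n$ rows; with more than $n$ of its column-arguments pinned to distinct points of $S$, no nonzero $n\times n$ minor can be extracted once the remaining variables are integrated out, so $R(S;\phi,\epsilon)$ vanishes. Correspondingly, $\Pf(K(S))$ (or $\Pf(K^+(S))$) is the Pfaffian of a matrix of rank at most $2n$ (respectively $2n+2$) but of strictly larger size, hence also zero.

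For $|S|=l\le n$, I would argue by descending induction on $l$, the base case being $l=n$. At $l=n$ the prefactor collapses to $1/\tau_n$ and no integration remains, so the claim reduces to the algebraic identity
\begin{align*}
\Pf(K(S)) \;=\; \frac{1}{\tau_n}\, \det(\phi_i(x_j))_{i,j=1}^n \, \Pf(\epsilon(x_i,x_j))_{i,j=1}^n,
\end{align*}
and its augmented counterpart for odd $n$. This is established as in the preceding proposition: block row/column operations bring $K(S)$ (resp.\ $K^+(S)$) into the form $A B A^T$, where $A$ encodes the values $\phi_i(x_j)$ and $B=\mathrm{diag}(M^{-T},-\epsilon)$ (with augmentations in the odd case), after which the identity $\Pf(A B A^T)=\det(A)\Pf(B)$ yields the claim. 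The inductive step is the integration lemma
\begin{align*}
\int_{X} \Pf(K(S \cup \{x\}))\, d\mu(x) \;=\; (n-|S|)\,\Pf(K(S)),
\end{align*}
which, iterated $n-l$ times, converts $\Pf(K(S))$ into the $(n-l)$-fold integral defining $R(S;\phi,\epsilon)$. The lemma itself follows from a Laplace expansion of the Pfaffian along the $2\times 2$ block of rows and columns attached to the new variable $x$, combined with the reproducing relation $\sum_k M_{ik}\,M^{-T}_{kj}=\delta_{ij}$ applied to the defining integrals for $M_{ij}$.

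The main obstacle is checking the integration lemma uniformly across both parities. In the odd case, the extra rows and columns of $\pm 1$ sitting inside $K^+$ participate in the Laplace expansion, and one must confirm that integrating them against $d\mu(x)$ produces no spurious contributions and that the combinatorial factor $(n-|S|)$ is reproduced exactly. This is a careful bookkeeping argument of the kind carried out in \cite{debruijn55,rains00}; once it is done the even- and odd-$n$ proofs coincide, and the proposition follows from the base case and the induction.
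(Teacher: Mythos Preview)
Your proposal is correct and matches the paper's approach: the paper gives no separate proof for this proposition, presenting it as the common formulation of the two preceding parity-specific propositions via de Bruijn's odd-order Pfaffian convention \eqref{eq:pf}, which is exactly the reduction you carry out. Your outline of the base case (the $ABA^T$ factorization and $\Pf(ABA^T)=\det(A)\Pf(B)$) and the inductive integration step reproduces the argument the paper gives for the odd case and attributes to \cite{rains00} for the even case.
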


\section*{Acknowledgements}
The author Z. Wang is supported by the Fundamental Research Funds for the Central Universities, Grant No. 00-800015ND and the author S. Li would like to thank Professor K. Takasaki for telling us the (modified) negative flow of BKP hierarchy has already appeared in the appendix of \cite{JM}, and thank  Dr. M. Wheeler for helpful comments. 

\begin{appendix}
\section{Applications to Tau function and BKP hierarchy}\label{app:BKP}
In this appendix, we review some facts about the neutral fermions and tau functions of BKP hierarchies. 

We start with the proof of Proposition \ref{prop:bosonfermion}. Notice that it suffices to prove the following lemma.
\begin{lemma}\label{lem:fermionAction}
There hold
\begin{equation*}
\langle 1|e^{-H_+[z^{-1}]}=2^{\frac12}\lz\phi(z)
\quad
\text{and}\quad
\langle 0|e^{-H_+[z^{-1}]}=2^{\frac12}\langle 1|\phi(z)
\end{equation*}
with $[z]=(2z,\frac{2z^3}{3},\frac{2z^5}{5},\cdots)$.
\end{lemma}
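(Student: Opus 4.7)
My plan is to reduce both identities to explicit vacuum expectations of products of neutral fermions, and then close the argument via the Wick-type formula \eqref{eqn:VacExp}. The first ingredient is a commutation relation: applying \eqref{hp2} with $t=[z^{-1}]$ and computing $\xi([z^{-1}],w)=2\sum_{n\geq 1,\,\mathrm{odd}}\tfrac{1}{n}(w/z)^{n}=\log\tfrac{1+w/z}{1-w/z}$, I obtain
\begin{align*}
e^{-H_+[z^{-1}]}\,\phi(w)\,e^{H_+[z^{-1}]}=\frac{z-w}{z+w}\,\phi(w).
\end{align*}
Combined with $H_+[z^{-1}]|0\rangle=0$ (so that $e^{-H_+[z^{-1}]}|0\rangle=|0\rangle$), inserting resolutions $e^{H_+[z^{-1}]}e^{-H_+[z^{-1}]}$ between the $\phi(z_i)$'s yields
\begin{align*}
e^{-H_+[z^{-1}]}\,\phi(z_1)\cdots\phi(z_r)|0\rangle=\prod_{i=1}^{r}\frac{z-z_i}{z+z_i}\;\phi(z_1)\cdots\phi(z_r)|0\rangle.
\end{align*}

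Next, since by extracting coefficients in the $z_i$'s the states $\phi(z_1)\cdots\phi(z_r)|0\rangle$ formally span $\cF_B$, it suffices to test both identities on such generating-function vectors, with $r$ odd for the first identity and $r$ even for the second. Pairing with $\langle 1|=\sqrt2\langle 0|\phi_0$, the first identity reduces to the equality of vacuum expectations
\begin{align*}
\prod_{i=1}^{r}\frac{z-z_i}{z+z_i}\,\langle 0|\phi_0\,\phi(z_1)\cdots\phi(z_r)|0\rangle=\langle 0|\phi(z)\,\phi(z_1)\cdots\phi(z_r)|0\rangle,
\end{align*}
while the second identity, after using $\sqrt2\,\langle 1|\phi(z)=2\langle 0|\phi_0\phi(z)$, reduces to exactly the same equation with $r$ even.

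Both sides are then closed-form expressions via \eqref{eqn:VacExp}. The right-hand side is $2^{-s}\prod_{i=1}^{r}\tfrac{1-z_i/z}{1+z_i/z}\prod_{1\leq j<j'\leq r}\tfrac{1-z_{j'}/z_j}{1+z_{j'}/z_j}$ with $2s=r+1$. For the left-hand side I write $\phi_0=[z_0^0]\phi(z_0)$ and apply \eqref{eqn:VacExp} to the resulting $(r+1)$-point function; each factor $\tfrac{1-z_i/z_0}{1+z_i/z_0}$ has constant term $1$ in its expansion as a formal power series in $z_i/z_0$, so the extraction of $z_0^0$ yields $2^{-s}\prod_{1\leq j<j'\leq r}\tfrac{1-z_{j'}/z_j}{1+z_{j'}/z_j}$. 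Multiplying by the prefactor $\prod_i(z-z_i)/(z+z_i)$ exactly reproduces the right-hand side, finishing the proof.

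The main obstacle is just consistency of the formal-series conventions: one must ensure that the implicit expansion direction used in \eqref{eqn:VacExp} is the same one that makes the constant term of each $\tfrac{1-z_i/z_0}{1+z_i/z_0}$ equal to $1$ (equivalently, treating the left variable as ``larger''). Once this convention is fixed the rest is routine cancellation, and the analogous argument for the second identity is obtained by swapping the parity of $r$.
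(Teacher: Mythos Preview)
Your proof is correct and follows essentially the same approach as the paper: both test the identities against generating vectors $\phi(z_1)\cdots\phi(z_r)|0\rangle$, commute $e^{-H_+[z^{-1}]}$ through the $\phi(z_i)$'s via \eqref{hp2}, extract $\phi_0$ as the constant coefficient of a generating field, and close the computation with the Wick formula \eqref{eqn:VacExp}. One small imprecision: the second identity does not reduce to \emph{literally} the same equation (the roles of $\phi_0$ and $\phi(z)$ are exchanged and a factor of $2$ appears), but it is indeed verified by the identical constant-term argument, so this does not affect the validity of your proof.
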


\begin{proof}
Let us prove the first identity, which is equivalent to the following identity
\begin{equation}\label{identity}
\lz\phi(z)\phi(z_1)\cdots\phi(z_{2s-1})\rz=\lz\phi_0e^{-H_+[z^{-1}]}\phi(z_1)\cdots\phi(z_{2s-1})\rz.
\end{equation}
By using equation \eqref{eqn:VacExp}, one has
\begin{equation*}
LHS=\displaystyle\frac{1}{2^s}\prod\limits_{i=1}^{2s-1}\displaystyle\frac{1-z_i/z}{1+z_i/z}\prod\limits_{j<j'}\displaystyle\frac{1-z_{j'}/z_j}{1+z_{j'}/z_j}.
\end{equation*}
Moreover, the RHS of \eqref{identity} is the coefficient of $k^0$ in $\lz\phi(k)e^{-H_+[z^{-1}]}\phi(z_1)\cdots\phi(z_{2s_1})|0\rangle$. It can be computed as 
\begin{equation*}
\begin{split}
\lz\phi(k)e^{-H_+[z^{-1}]}\phi(z_1)\cdots\phi(z_{2s_1})\rz
=&\prod\limits_{i=1}^{2s-1}e^{-\xi([z^{-1}],z_i)}\lz\phi(k)\phi(z_1)\cdots\phi(z_{2s-1})e^{-H_+[z^{-1}]}\rz\\
=&\prod\limits_{i=1}^{2s-1}e^{-\xi([z^{-1}],z_i)}\lz\phi(k)\phi(z_1)\cdots\phi(z_{2s-1})\rz\\
=&\prod\limits_{i=1}^{2s-1}\displaystyle\frac{1-z_i/z}{1+z_i/z}\cdot\displaystyle\frac{1}{2^s}\prod\limits_{i=1}^{2s-1}\displaystyle\frac{1-z_i/k}{1+z_i/k}\prod\limits_{j<j'}\displaystyle\frac{1-z_{j'}/z_j}{1+z_{j'}/z_j},
\end{split}
\end{equation*}
from which one knows the coefficient of $k^0$ is $\displaystyle\frac{1}{2^s}\prod\limits_{i=1}^{2s-1}\displaystyle\frac{1-z_i/z}{1+z_i/z}\prod\limits_{j<j'}\displaystyle\frac{1-z_{j'}/z_j}{1+z_{j'}/z_j}$ indeed.
\end{proof}

\begin{remark}
It also holds true for the dual vertex operator that
\begin{align*}
\bar{X}_B(z)\cdot e^{H_-(t)}|0\rangle=2^{\frac{1}{2}}\phi(z)e^{H_-(t)}|1\rangle\quad\text{and}\quad \bar{X}_B(z)\cdot e^{H_-(t)}|1\rangle=2^{\frac{1}{2}}\phi(z)e^{H_-(t)}|0\rangle.
\end{align*}
\end{remark} 

We see soon that Lemma \ref{lem:fermionAction} is used in calculations of tau functions of BKP hierarchies.

Firstly we introduce the basic bilinear condition of BKP hierarchy and give some discussions about the tau function of this hierarchy, which was called the small BKP hierarchy in \cite{OST} 
\begin{align*}
\sum_{n\in\mathbb{Z}}(-1)^n\phi_nG\otimes\phi_{-n}G=\sum_{n\in\mathbb{Z}}(-1)^{n}G\phi_n\otimes G\phi_{-n},
\end{align*}
or an equivalent matrix form as
\begin{align}\label{bbc}
\sum_{n\in\mathbb{Z}}(-1)^n\langle U|G\phi_n|V\rangle\langle U'|G\phi_{-n}|V'\rangle=
\sum_{n\in\mathbb{Z}}(-1)^n\langle U|\phi_nG|V\rangle\langle U'|\phi_{-n}G|V'\rangle.
\end{align}
Noting that different hierarchies are obtained by choosing different $\langle U|$, $\langle U'|$, $|V\rangle$ and $|V'\rangle$, now we constrain ourselves to the simplest case, that is, the BKP hierarchy.
Taking $\langle U|=\langle 0|\phi_0e^{H_+(t)}$, $\langle U'|=\langle 0|\phi_0e^{H_+(t')}$ and $|V\rangle=|V'\rangle=|0\rangle$, and assume the tau function has the form \cite{JM}
\begin{align*}
\tau(t)=\langle 0|e^{H_+(t)}G|0\rangle=2\langle 0|\phi_0 e^{H_+(t)}G\phi_0|0\rangle,
\end{align*}
then one has
\begin{align*}
\frac{1}{4}\tau(t)\tau(t')&=\sum_{n\in\mathbb{Z}}(-1)^n\langle 0|\phi_0e^{H_+(t)}\phi_nG|0\rangle\langle 0|\phi_0e^{H_+(t')}\phi_{-n}G|0\rangle\\
&=\text{Res}(z^{-1}\langle 0|\phi_0e^{H_+(t)}\phi(z)G|0\rangle\langle 0|\phi_0e^{H_+(t')}\phi(-z)G|0\rangle )\\
&=\frac{1}{4}\text{Res}(z^{-1}e^{\xi(t-t',z)}\langle 0|e^{H_+(t-[z^{-1}])}G|0\rangle\langle 0|e^{H_+(t'+[z^{-1}])}G|0\rangle),
\end{align*}
where Lemma \ref{lem:fermionAction} is used to derive the last identity.
 
It equals the following bilinear form 
\begin{align*}
\oint_{C_\infty}\frac{dz}{2\pi iz}e^{\xi(t-t',z)}\tau(t-[z^{-1}])\tau(t'+[z^{-1}])=\tau(t)\tau(t'),
\end{align*}
and this is the bilinear equation of BKP hierarchy indeed.

In what follows, a negative flow of BKP hierarchy is given when $t_{-1}$, $t_{-3}$, $\cdots$ are introduced in tau functions \cite{OST}. Taking $\langle U|=\langle 0|\phi_0e^{H_+(t)}$, $\langle U'|=\langle 0|\phi_0e^{H_+(t')}$, $|V\rangle=e^{H_-(t)}|0\rangle$ and $|V'\rangle=e^{H_-(t')}|0\rangle$, and define the tau function
\begin{align*}
\tau(t_+,t_-)=\langle 0|e^{H_+(t)}Ge^{H_-(t)}|0\rangle=2\langle 0|\phi_0 e^{H_+(t)}Ge^{H_-(t)}\phi_0|0\rangle,
\end{align*}
then the negative flow of BKP hierarchy can be obtained, for which we state as the following proposition.
\begin{proposition}
There exists a negative BKP hierarchy
\begin{align*}
&\oint_{C_\infty}e^{\xi(t_+-t'_+,z)}\tau(t_+-[z^{-1}],t_-)\tau(t'_++[z^{-1}],t'_-)\frac{dz}{2\pi iz}\\
&=\oint_{C_0}e^{\xi(t'_--t_-,z^{-1})}\tau(t_+,t_-+[z])\tau(t'_+,t'_--[z])\frac{dz}{2\pi iz}.
\end{align*}
\end{proposition}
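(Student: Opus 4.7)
The plan is to substitute into the bilinear identity \eqref{bbc} the specific choice
$$\langle U|=\langle 0|\phi_0 e^{H_+(t_+)},\ \langle U'|=\langle 0|\phi_0 e^{H_+(t'_+)},\ |V\rangle=e^{H_-(t_-)}|0\rangle,\ |V'\rangle=e^{H_-(t'_-)}|0\rangle,$$
and convert the two resulting scalar sums into the two contour integrals that appear in the target identity. The key packaging tool is the generating-series formula
$$\sum_{n\in\mathbb{Z}}(-1)^n \phi_n\otimes\phi_{-n}=\oint \frac{dz}{2\pi i z}\,\phi(z)\otimes\phi(-z),$$
applied once to each side of \eqref{bbc}, together with the identification $\langle 0|\phi_0=2^{-1/2}\langle 1|$ (which follows from $|1\rangle=2^{1/2}\phi_0|0\rangle$), so that the $\tau$-function can be rewritten in the $\phi_0$-free form $\tau(t_+,t_-)=\langle 1|e^{H_+(t_+)}Ge^{H_-(t_-)}|1\rangle$.

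On the $\phi_n G$ side of \eqref{bbc}, I would commute $\phi(\pm z)$ leftward through the positive-time exponential using \eqref{hp2}, extracting a factor $e^{\xi(t_+,\pm z)}$ (and its primed counterpart), and then apply Lemma \ref{lem:fermionAction} in the form $\langle 1|\phi(z)=2^{-1/2}\langle 0|e^{-H_+[z^{-1}]}$ to convert each bra into a Miwa shift of the positive times. Exploiting the odd-parity identities $\xi(t,-z)=-\xi(t,z)$ and $[(-z)^{-1}]=-[z^{-1}]$ for the $-z$ factor, the product of the two bra--ket factors becomes
$$\tfrac14\,e^{\xi(t_+-t'_+,z)}\tau(t_+-[z^{-1}],t_-)\tau(t'_++[z^{-1}],t'_-),$$
whose contour integral around $C_\infty$ is $\tfrac14$ times the left-hand side of the target identity.

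On the $G\phi_n$ side of \eqref{bbc}, where $\phi(\pm z)$ is wedged between $G$ and the negative-time exponential, I would push it rightward using the dual form of Lemma \ref{lem:fermionAction} recorded in the Remark, namely
$$\phi(z)e^{H_-(t_-)}|0\rangle=2^{-1/2}e^{-\xi(t_-,z^{-1})}e^{H_-(t_-+[z])}|1\rangle,$$
which follows from $\bar{X}_B(z)e^{H_-(t_-)}|1\rangle=2^{1/2}\phi(z)e^{H_-(t_-)}|0\rangle$ once one observes that $e^{\xi(\tilde{\partial}_-,z)}$ realises the Miwa shift $t_-\mapsto t_-+[z]$. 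Combined with the analogous identity for $-z$, the product of the two bra--ket factors becomes
$$\tfrac14\,e^{\xi(t'_--t_-,z^{-1})}\tau(t_+,t_-+[z])\tau(t'_+,t'_--[z]),$$
whose contour integral around $C_0$ is $\tfrac14$ times the right-hand side of the target identity. Equating the two reductions and cancelling $\tfrac14$ yields the stated negative BKP bilinear equation.

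The main obstacle is establishing the dual form of Lemma \ref{lem:fermionAction} used in the preceding paragraph, since the appendix proves only the positive-time version explicitly. I would handle this either by an analogous Wick-theorem calculation or, more efficiently, by unpacking $\bar{X}_B(z)=e^{-\xi(t_-,z^{-1})}e^{\xi(\tilde{\partial}_-,z)}$ and noting that $e^{\xi(\tilde{\partial}_-,z)}$ is precisely translation by the Miwa variables $[z]_k=2z^k/k$ acting on the negative times. A secondary bookkeeping concern is to ensure that the signs $(-1)^n$ coming from the bilinear, the sign reversal $\xi(t,-z)=-\xi(t,z)$, and the relation $[(-z)^{-1}]=-[z^{-1}]$ combine correctly; because all Miwa components are indexed by odd integers, these signs conspire to produce precisely the shifts $\pm[z^{-1}]$ and $\pm[z]$ appearing in the target.
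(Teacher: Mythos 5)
Your proposal is correct and follows essentially the route the paper intends: the paper omits this proof, remarking only that it is ``similar with that of BKP hierarchy,'' and your argument is exactly that template --- substitute the stated $\langle U|,\langle U'|,|V\rangle,|V'\rangle$ into \eqref{bbc}, package the two sums as residues of $\phi(z)\otimes\phi(-z)$, and reduce the $\phi_nG$ side via Lemma \ref{lem:fermionAction} and the $G\phi_n$ side via its dual (the appendix Remark on $\bar{X}_B$, with $e^{\xi(\tilde{\partial}_-,z)}$ acting as the Miwa shift $t_-\mapsto t_-+[z]$). Your sign and parity bookkeeping ($\xi(t,-z)=-\xi(t,z)$, $[-z]=-[z]$, $[(-z)^{-1}]=-[z^{-1}]$) and the matching $\tfrac14$ prefactors on both sides are all consistent with the paper's explicit BKP computation.
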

The proof of this proposition is similar with that of BKP hierarchy and we omit it here. By applying the transformation
\begin{align*}
t_+\to t_+-a,\, t'_+\to t_++a,\, t_-\to t_--b,\, t'_-\to t_-+b,
\end{align*}
one can obtain the following bilinear equations
\begin{align*}
\sum_{k\geq0}p_k(-2a)p_k(\tilde{D}_+)e^{\sum_{l\geq 1,odd}a_lD_l+b_lD_{-l}}\tau\cdot\tau=\sum_{m\geq 0}p_m(2b)p_m(-\tilde{D}_-)e^{\sum_{l\geq 1,odd}a_lD_l+b_lD_{-l}}\tau\cdot\tau,
\end{align*}
where $p_k$ is a homogeneous symmetric function and a detailed introduction is given in the next section. Here one should notice that if the index of the negative flow vanishes, this hierarchy goes back to the original BKP hierarchy. The first member of this hierarchy is
\begin{align*}
D_{-1}(D_3-D_1^3)\tau\cdot\tau=0,
\end{align*}
which has appeared in \cite{JM} as an equation derived from $D'_\infty$ and the negative BKP hierarchy here seems to be embedded as a subsystem of the $D'_\infty$ hierarchy. The negative flow plays an important role since it involves the dual of the original current operator, which is closely related to the construction of the correlation function.

Besides the negative flow of BKP hierarchy, the modified BKP (mBKP) hierarchy is also important \cite{date82}. It exhibits two different kinds of tau functions based on the decomposition of the Fock space $\mathcal{F}_B=\mathcal{F}_B^0\oplus \mathcal{F}_B^1$ and it provides the original idea to consider a Pfaffian point process of even and odd points in section \ref{sec:Mat}. Moreover, the mBKP hierarchy holds for the following proposition.

\begin{proposition}
There exists an mBKP hierarchy
\begin{align}\label{mbkp}
\oint_{C_\infty} e^{\xi(t-t',z)}\tau_{n+1}(t-[z^{-1}])\tau_{n}(t'+[z^{-1}])\frac{dz}{2\pi iz}=2\tau_n(t)\tau_{n+1}(t')-\tau_{n+1}(t)\tau_n(t').
\end{align}
\end{proposition}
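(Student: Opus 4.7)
The plan is to mimic the derivation of the BKP hierarchy already given in this appendix: specialize the basic bilinear condition \eqref{bbc} for the group-like element $G$ to bras and kets adapted to the Fock-space splitting $\mathcal{F}_B=\mathcal{F}_B^0\oplus\mathcal{F}_B^1$, and then translate the resulting vacuum expectations back into the two tau functions via Lemma~\ref{lem:fermionAction}. Concretely, I would take $\tau_n(t)=\langle n|e^{H_+(t)}G|n\rangle$ with $n\in\{0,1\}$ understood mod~$2$, so that $\tau_0$ records the even component and $\tau_1$ the odd one; the fact that these form a compatible pair of tau functions of a single hierarchy is built into \eqref{bbc}.

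Next, I would apply \eqref{bbc} to $\langle U|\otimes\langle U'|=\langle n+1|e^{H_+(t)}\otimes\langle n|e^{H_+(t')}$ and $|V\rangle\otimes|V'\rangle=|n\rangle\otimes|n+1\rangle$, which is the unique parity-compatible choice for which no pairing collapses to zero and for which both $\tau_n$ and $\tau_{n+1}$ appear naturally. Packaging $\sum_k(-1)^k\phi_k\otimes\phi_{-k}$ into the residue $\mathop{\mathrm{Res}}_{z=0} z^{-1}\phi(-z)\otimes\phi(z)$ and invoking Lemma~\ref{lem:fermionAction} in the form
\[
\langle n\pm 1|e^{H_+(t)}\phi(z)G|n\rangle \;=\; 2^{-1/2}\,e^{\xi(t,z)}\,\tau_n\bigl(t-[z^{-1}]\bigr),
\]
the left-hand side of \eqref{bbc} becomes, after a change of variable $z\mapsto -z$, exactly the contour integral on the LHS of \eqref{mbkp}, up to an overall factor of $\tfrac12$ from the two $2^{-1/2}$'s. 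On the other side, $\sum_k(-1)^k\langle U|G\phi_k|V\rangle\langle U'|G\phi_{-k}|V'\rangle$ collapses to a handful of surviving terms because $\phi_k|n\rangle$ vanishes outside a small range of $k$ (using $\phi_k|0\rangle=0$ for $k<0$, $\phi_0|0\rangle=2^{-1/2}|1\rangle$, and their analogues on $|1\rangle$); these survivors are then reorganized into $\tau_n$ and $\tau_{n+1}$ via the dual relations $\langle 0|\phi_0=2^{-1/2}\langle 1|$ and $\langle 1|\phi_0=2^{-1/2}\langle 0|$.

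The main obstacle will be this last reorganization: showing that the $G\phi_k$ side produces the \emph{two}-term combination $2\tau_n(t)\tau_{n+1}(t')-\tau_{n+1}(t)\tau_n(t')$ rather than the single diagonal term $\tau_{n+1}(t)\tau_n(t')$ that one gets in the parity-symmetric basic BKP calculation. In the BKP derivation with identical bras and kets on both tensor factors only the $k=0$ term survives and one directly recovers $\tau(t)\tau(t')$; here the asymmetry between $|n\rangle$ and $|n+1\rangle$ (and between the sectors $G|n\rangle\in\mathcal{F}_B^n$ and $G|n+1\rangle\in\mathcal{F}_B^{n+1}$) is expected to generate an additional surviving contribution whose coefficient must come out to be precisely $+2\tau_n(t)\tau_{n+1}(t')$. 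Pinning down this coefficient---and ruling out further corrections beyond the stated two terms---is the technical crux of the argument, and is exactly where the characteristic identity $\phi_0^2=\tfrac12$ of neutral fermions (absent in the charged-fermion/mKP analogue) is expected to play the essential role. Once this is in place, equating the two sides of \eqref{bbc} and clearing the common factor $\tfrac12$ yields \eqref{mbkp}.
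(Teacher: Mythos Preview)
Your proposed route does not produce \eqref{mbkp}; the obstacle you flag as ``the technical crux'' is in fact fatal for the choices you make. With $\langle U|\otimes\langle U'|=\langle 1|e^{H_+(t)}\otimes\langle 0|e^{H_+(t')}$ and $|V\rangle\otimes|V'\rangle=|0\rangle\otimes|1\rangle$, the right-hand side of \eqref{bbc} is
\[
\sum_{k}(-1)^k\langle 1|e^{H_+(t)}G\phi_k|0\rangle\,\langle 0|e^{H_+(t')}G\phi_{-k}|1\rangle,
\]
and since $\phi_k|0\rangle=0$ for $k<0$ while $\phi_{-k}|1\rangle=2^{1/2}\phi_{-k}\phi_0|0\rangle=0$ for $k>0$, \emph{only} $k=0$ survives, giving the single term $\tfrac12\,\tau_1(t)\tau_0(t')$. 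No second contribution appears, and $\phi_0^2=\tfrac12$ plays no role beyond fixing that one coefficient. Worse, your two tau functions coincide: the identity $\langle 0|e^{H_+(t)}G|0\rangle=2\langle 0|\phi_0 e^{H_+(t)}G\phi_0|0\rangle=\langle 1|e^{H_+(t)}G|1\rangle$ (used verbatim in this appendix when introducing the BKP tau function) shows $\tau_0=\tau_1$ for any group-like $G$, so your bilinear relation collapses to the ordinary BKP equation already derived, not to \eqref{mbkp}.

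The paper's argument is genuinely different. It takes $\langle U|=\langle 0|e^{H_+(t)}\phi(z)$ with a \emph{spectral parameter} $z$ built into the bra, and correspondingly defines $\tau_{n+1}(t)=2\langle 0|e^{H_+(t)}\phi(z)G\phi_0|0\rangle$, i.e.\ the vertex-operator (B\"acklund) transform of $\tau_n$---this is what the Remark after the Proposition means by ``linked by the vertex operator.'' The two-term right-hand side of \eqref{mbkp} then arises not from surviving $k$-values in $G\phi_k|V\rangle$ but from the fermionic anticommutation
\[
\phi(z)\phi(z')=\delta\!\left(-\tfrac{z}{z'}\right)-\phi(z')\phi(z)
\]
applied inside $\langle 0|e^{H_+(t)}\phi(z)\phi(z')G|0\rangle$: the delta contributes $2\tau_n(t)\tau_{n+1}(t')$ and the commuted piece contributes the integral term, which after the Boson--Fermion correspondence is the left-hand side of \eqref{mbkp}. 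This mechanism is absent from your setup, and cannot be recovered by juggling the sectors $|0\rangle$, $|1\rangle$ alone.
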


\begin{proof}
According to the basic bilinear condition \eqref{bbc}, if one takes
\begin{align*}
\langle U|=\langle0|e^{H_+(t)}\phi(z),\quad \langle U'|=\langle 0|\phi_0e^{H_+(t')},\quad |V\rangle=|V'\rangle=|0\rangle,
\end{align*}
then it follows
\begin{align*}
\text{Res}(\langle 0|e^{H_+(t)}\phi(z)\phi(z')G|0\rangle\langle0|\phi_0e^{H_+(t')}\phi(-z')G|0\rangle)=\langle 0|e^{H_+(t)}\phi(z)G\phi_0|0\rangle\langle0|\phi_0e^{H_+(t')}G\phi_0|0\rangle.
\end{align*}
Moreover, if we set tau functions as
\begin{align*}
\tau_n=\langle 0|e^{H_+(t)}G|0\rangle=2\langle 0|\phi_0e^{H_+(t)}G\phi_0|0\rangle,\quad \tau_{n+1}=2\langle 0|e^{H_+(t)}\phi(z)G\phi_0|0\rangle=2\langle 0|\phi_0e^{H_+(t)}\phi(z)G|0\rangle,
\end{align*}
the above equation can be rewritten as
\begin{align*}
\oint_{z'=\infty}\langle 0|e^{H_+(t)}\phi(z)\phi(z')G|0\rangle\langle 0|\phi_0e^{H_+(t')}\phi(-z')G|0\rangle\frac{dz'}{2\pi iz'}=\frac{1}{4}\tau_{n+1}(t)\tau_{n}(t').
\end{align*}
Notice that $\phi(z)\phi(z')=\delta(-\frac{z}{z'})-\phi(z')\phi(z)$ and use the Boson-Fermion correspondence, one can easily obtain the equation \eqref{mbkp}, and thus complete the proof.
\end{proof}

\begin{remark}
This hierarchy firstly appeared in \cite{date82} without proof and it is noted that the two different tau functions $\tau_n$ and $\tau_{n+1}$ are linked to each other by the vertex operator. In fact, according to the Boson-Fermion correspondence of type $B_\infty$,
one can see that the tau functions are of different expressions when Pfaffian is involved, and the solutions are split into two families. Examples can be found in \cite{hirota04,hu17}.
\end{remark}
Indeed, the above mBKP hierarchy (\ref{mbkp}) can be equivalently expressed in a bilinear form as
\begin{align*}
\left[\sum_{l\geq1}h_l(-2a)h_l(2\tilde{D}_+)e^{\sum_{j=1}a_jD_j}+4\sinh(\sum_{j=1}a_jD_j)\right]\tau_n\cdot\tau_{n+1}=0.
\end{align*}
By considering that $a$ is of one component, the above equation is
\begin{align*}
\sum_{m\geq1}a^m\left[\sum_{k+l=m,l\geq1,k\geq0}(\frac{(-2)^l}{l!k!}p_l(2\tilde{D}_+)D_1^k)+4\frac{D_1^m}{m!}\right]\tau_n\cdot\tau_{n+1}=0,\quad\text{for $m$ odd}.
\end{align*}
The first two nontrivial equations are the cases of $m=3$ and $m=5$:
\begin{align*}
(D_1^3-D_3)\tau_n\cdot\tau_{n+1}=0,\quad (6D_5-5D_3D_1^2-D_1^5)\tau_n\cdot\tau_{n+1}=0,
\end{align*}
which are the modified BKP equations (or the B\"acklund transformation of BKP equation).

Here we would like to mention that the negative flow of the modified BKP equation is also included in the $D'_\infty$ hierarchy \cite{JM}. Consider the tau function with time flows $t_+$ and $t_-$, one can define\footnote{It should be noted that the different choices of group-like elements $G$ can help us to construct different kinds of solutions of these integrable hierarchies and this is the idea why the matrix integrals solution can induce some Pfaffian point process.}
\begin{align}\label{ve}
\begin{aligned}
\tau_n&=\langle 0|e^{H_+(t)}Ge^{H_-(t)}|0\rangle=2\langle 0|\phi_0e^{H_+(t)}Ge^{H_-(t)}\phi_0|0\rangle,\\ \tau_{n+1}&=2\langle 0|e^{H_+(t)}\phi(z)Ge^{H_-(t)}\phi_0|0\rangle=2\langle 0|\phi_0e^{H_+(t)}\phi(z)Ge^{H_-(t)}|0\rangle.
\end{aligned}
\end{align}
Based on the equation \eqref{bbc} and choosing
\begin{align*}
\langle U|=\langle0|e^{H_+(t)}\phi(z),\quad \langle U'|=\langle 0|\phi_0e^{H_+(t')},\quad |V\rangle=e^{H_-(t)}|0\rangle,\quad
|V'\rangle=e^{H_-(t')}|0\rangle,
\end{align*}
 one can obtain
\begin{align*}
&\oint_{C_0}e^{\xi(t'_--t_-,z^{-1})}\tau_{n+1}(t_+,t_-+[z])\tau_n(t'_+,t'_--[z])\frac{dz}{2\pi iz}=\\
&2\tau_n(t_+,t_-)\tau_{n+1}(t'_+,t'_-)-\oint_{C_\infty}e^{\xi(t_+-t'_+,z)}\tau_{n+1}(t_+-[z^{-1}],t_-)\tau_n(t'_++[z^{-1}],t'_-)\frac{dz}{2\pi iz},
\end{align*}
and its first member is $D_1D_{-1}\tau_n\cdot\tau_{n+1}=0$, which has been listed in the appendix of \cite{JM}.
\end{appendix}

\small
\bibliographystyle{abbrv}

\begin{thebibliography}{1}

\bibitem{adler95}
M. Adler and P. van Moerbeke.
\newblock Matrix integrals, Toda symmetries, Virasoro constraints and orthogonal polynomials.
\newblock {\em Duke Math J.}, 80:863-911, 1995.

\bibitem{alexandrove13}
A. Alexandrov and A. Zabrodin.
\newblock Free fermions and tau-functions.
\newblock {\em J.Geom.Phys.}, 67:37-80, 2013.

\bibitem{borodin09}
A. Borodin.
\newblock Determinantal point processes.
\newblock {\em The Oxford handbook of random matrix theory}, 231-249, Oxford Univ. Press, Oxford, 2011.

\bibitem{borodin05}
A. Borodin and E. Rains.
\newblock Eynard-Mehta theorem, Schur Process, and their Pfaffian analogs.
\newblock {\em J. Statis. Phys.}, 121:291-317, 2005.

\bibitem{Date1981Vertex}
E.~Date, M.~Kashiwara, and T.~Miwa.
\newblock  Transformation groups for soliton equations: \uppercase\expandafter{\romannumeral2}. Vertex operators and $\tau$ functions
\newblock {\em Proc. Japan Acad. Ser. A Math. Sci.}, 57(8):387--392, 1981.

\bibitem{date82}
E.~Date, M.~Jimbo, M.~Kashiwara, and T.~Miwa.
\newblock Transformation groups for soliton equations : \uppercase\expandafter{\romannumeral4}. {A} new hierarchy of soliton equations of {KP}-type.
\newblock {\em Phys. D}, 4(3):343--365, 1982.


\bibitem{debruijn55}
N.~de~Bruijn.
\newblock On some multiple integrals involving determinants.
\newblock {\em J. Indian Math. Soc.}, 19:133--151, 1955.

\bibitem{foda07}
O. Foda and M. Wheeler.
\newblock BKP plane partitions.
\newblock {\em J. High Energy Phys.}, 075, 2007.

\bibitem{forrester16}
P. Forrester and M. Kieburg.
\newblock Relating the Bures measure to the Cauchy two-matrix model.
\newblock {\em Comm. Math. Phys.}, 342:151-187, 2016.

\bibitem{Hal}
B. Hall.
\newblock {\em Lie Groups, Lie Algebras, and Representations An Elementary Introduction}, 
\newblock  Graduate Texts in Mathematics, volume 222.
\newblock Springer, 2015.

\bibitem{hirota04}
R.~Hirota.
\newblock {\em The direct method in soliton theory}, 
\newblock Cambridge Tracts in Mathematics, volume 155.
\newblock Cambridge University Press, 2004.

\bibitem{hu17}
X. Hu and S. Li
\newblock The partition function of Bures ensemble as the $\tau$-function of BKP and DKP hierarchies: continuous and discrete.
\newblock {\em J. Phys. A}, 50:285201, 2017.



\bibitem{JM}
M.~Jimbo and T.~Miwa.
\newblock Solitons and infinite dimensional {L}ie algebras.
\newblock {\em Publ. Res. Inst. Math. Sci.}, 19(3):943--1001, 1983.

\bibitem{kac97}
V. Kac and J. van de Leur.
\newblock The geometry of spinors and the multicomponent BKP and DKP hierarchies.
\newblock {\em CRM Proc. Lecture Notes, 14}, 159-202, Amer. Math. Soc., Providence, RI, 1998.

\bibitem{macdonald79}
I.~Macdonald.
\newblock {\em Symmetric functions and Hall polynomials}.
\newblock Oxford University Press, 1979.

\bibitem{matsumoto2005correlation}
S. Matsumoto.
\newblock Correlation functions of the shifted Schur measure.
\newblock {\em J. Math. Soc. Japan}, 57:619-637, 2005.

\bibitem{natanzon15}
S. Natanzon and A. Orlov.
\newblock BKP and projective Hurwitz numbers.
\newblock {\em Lett. Math. Phys.}, 107:1065-1109, 2017.

\bibitem{nimmo05}
J.~Nimmo and A. Orlov.
\newblock A relationship between rational and multi-soliton solutions of the BKP hierarchy.
\newblock {\em Glasgow Math. J.}, 47A:149-168, 2005.

\bibitem{Okounkov00}
A. Okounkov.
\newblock Toda equations for Hurwitz numbers.
\newblock {\em Math. Res. Lett.}, 7:447-453, 2000.

\bibitem{Okounkov2001Infinite}
A.~Okounkov.
\newblock Infinite wedge and random partitions.
\newblock {\em Selecta Mathematica}, 7(1):57--81, 2001.


\bibitem{okounkov2003correlation}
A. Okounkov and N. Reshetikhin.
\newblock Correlation function of Schur process with application to local geometry of a random 3-dimensional Young diagram.
\newblock {\em J. Amer. Math. Soc.}, 16:581-603, 2003.

\bibitem{OST}
A. Orlov, T.~Shiota, and K.~Takasaki.
\newblock Pfaffian structures and certain solutions to {BKP} hierarchies {I}.
  {S}ums over partitions.
\newblock {\em arXiv:1201.4518}, 2012.

\bibitem{orlov16}
A. Orlov, T. Shiota and K. Takasaki.
\newblock Pfaffian structures and certain solutions to BKP hierarchies II. Multiple integrals.
\newblock arXiv: 1611.02244, 2016.


\bibitem{rains00}
E. Rains.
\newblock Correlation functions for symmetrized increasing subsequences.
\newblock arXiv:math/0006097v1 [math.CO]

\bibitem{tracy2004limit}
A. Tracy and H. Widom.
\newblock A limit theorem for shifted Schur measures.
\newblock {\em Duke Math. J.}, 123:171-208, 2004.


\bibitem{ueno84}
K. Ueno and K. Takasaki.
\newblock Toda lattice hierarchy, in `Group representations and system of differential equations'.
\newblock {\em Mathematical Society of Japan}, 1-95, 1984.



\bibitem{vanleur07}
J. van de Leur and A. Orlov.
\newblock Random turn walk on a half line with creation of particles at the origin.
\newblock {\em Phys. Lett. A}, 31:2675-2682, 2009.

\bibitem{vanleur15}
J.~van de Leur and A. Orlov.
\newblock Pfaffian and determinantal tau functions {I}.
\newblock {\em Lett. Math. Phys.}, 11:1499-1531, 2015.

\bibitem{vanmoerbeke2011Random}
P.~van Moerbeke.
\newblock Random and integrable models in mathematics and physics.
\newblock Random Matrices, Random Processes and Integrable Systems, 3-130.
, Springer, New York, NY, 2011. 
\bibitem{vuletic2007shifted}
M. Vuleti\'c.
\newblock The shifted Schur process and asymptotics of large random strict plane partitions.
\newblock {\em Int. Math. Res. Not.}, 14: Art. ID rnm043, 53pp, 2007.

\bibitem{you89}
Y.~You.
\newblock Polynomial solutions of the {BKP} hierarchy and projective
  representations of symmetric groups.
\newblock {\em Infinige-dimensinoal {L}ie algebras and groups
  (Luminy-Marseille, 1988), {A}dv. {S}er. {M}ath. {P}hys}, 7:449--464, 1989.



\end{thebibliography}

\end{document}